\newtheorem{assumption}{\bf Assumption}
\newtheorem{theorem}{\bf Theorem}
\newtheorem{proposition}{\bf Proposition}
\newtheorem{lemma}{\bf Lemma}
\newtheorem{remark}{\bf Remark}
\title{
Linear robust adaptive model predictive control:  \\ 
Computational complexity and conservatism \\- extended version
}
\author{Johannes K\"ohler$^1$, Elisa Andina$^2$, Raffaele Soloperto$^1$, Matthias A. M\"uller$^3$, Frank Allg\"ower$^1$
\thanks{%
This work was supported by the German Research Foundation under Grants GRK 2198/1 - 277536708, AL 316/12-2, and MU 3929/1-2 - 279734922. 
The authors thank the International Max Planck Research School for Intelligent Systems (IMPRS-IS) for supporting Raffaele Soloperto.
}
\thanks{$^1$Johannes K\"ohler, Raffaele Soloperto and Frank Allg\"ower are with the Institute for Systems Theory and Automatic Control, University of Stuttgart, 70550 Stuttgart, Germany.
(email:$\{$johannes.koehler, raffaele.soloperto, frank.allgower$\}$@ist.uni-stuttgart.de).}
\thanks{$^2$Elisa Andina is a M.Sc. student at the Universit\`{a} di Bologna, 40136 Bologna, Italiy (email: elisa.andina@studio.unibo.it).}
\thanks{$^3$Matthias A. M\"uller is with the Institute of Automatic Control, Leibniz University Hannover, 30167 Hannover, Germany.
(email:mueller@irt.uni-hannover.de).}
}
\begin{document}
\IEEEoverridecommandlockouts
\IEEEpubid{\begin{minipage}{\textwidth}\ \\[12pt] \\ \\
         \copyright 2019 IEEE.  Personal use of this material is  permitted.  Permission from IEEE must be obtained for all other uses, in  any current or future media, including reprinting/republishing this material for advertising or promotional purposes, creating new  collective works, for resale or redistribution to servers or lists, or  reuse of any copyrighted component of this work in other works.
     \end{minipage}}

\maketitle

\begin{abstract}
In this paper, we present a robust adaptive model predictive control (MPC) scheme for linear systems subject to parametric uncertainty and additive disturbances. 
The proposed approach provides a computationally efficient formulation with theoretical guarantees (constraint satisfaction and stability), while allowing for reduced conservatism and improved performance due to online parameter adaptation.
A moving window parameter set identification is used to compute a fixed complexity parameter set based on past data. 
Robust constraint satisfaction is achieved by using a computationally efficient tube based robust MPC method. 
The predicted cost function is based on a least mean squares point estimate, which ensures finite-gain $\mathcal{L}_2$ stability of the closed loop. 
The overall algorithm has a fixed (user specified) computational complexity. 
We illustrate the applicability of the approach and the trade-off between conservatism and computational complexity using a numerical example.

This paper is an extended version of~\cite{Koehler2019Adaptive}, and contains additional details regarding the theoretical proof of Theorem~\ref{thm:main}, the numerical example (Sec.~\ref{sec:num}),  and the offline computations in Appendix~\ref{App:tube}--\ref{App:terminal}.  
\end{abstract}

\section{Introduction}
\subsubsection*{Motivation}
Model predictive control (MPC)~\cite{rawlings2017model,kouvaritakis2016model} is an optimization based control method that can handle complex multi-input multi-output (MIMO) systems with hard state and input constraints. 
Model uncertainty and disturbances can have an adversarial impact on performance and constraint satisfaction. 
These issues can be addressed using robust MPC formulations, which can, however, be quite conservative in case of parametric uncertainty. 
This motivates the design of adaptive/learning MPC schemes that can use online model adaptation to reduce the conservatism and improve the performance. 
In this paper, we provide a robust adaptive MPC formulation that provides theoretical guarantees, allows for online performance improvement and is computationally efficient. 

\subsubsection*{Related work}
Tube based robust MPC methods are the simplest way to ensure stability and robust constraint satisfaction under uncertainty. 
Robust constraint satisfaction is typically achieved by including a pre-stabilizing feedback and computing a polytopic tube that bounds the uncertain predicted trajectories, compare for example~\cite{fleming2015robust,Robust_TAC_19,rakovic2012homothetic}.  
The performance and conservatism of these robust MPC approaches can be improved by using online system identification or parameter adaptation.

In~\cite{aswani2013provably,di2016indirect}, a fixed uncertainty description is used to ensure robust constraint satisfaction, while an online adapted model is used in the cost function to improve performance. 

The conservatism of the robust MPC constraint tightening can be reduced by using online set-membership system identification. 
For FIR systems, this has been proposed in~\cite{tanaskovic2014adaptive} and extended to time-varying systems and stochastic uncertainty in~\cite{tanaskovic2019adaptive} and~\cite{bujarbaruah2018adaptive}, respectively. 
In~\cite{lorenzen2019robust}, this approach has been extended to general linear state space models, by combining set-membership estimation with the homothetic tube based robust MPC formulation in~\cite{rakovic2012homothetic}. 
By using an additional least mean square (LMS) point estimate for the cost function, this scheme also ensures finite-gain $\mathcal{L}_2$-stability. 
The main drawback of this approach is that the resulting quadratic program (QP) has significantly more optimization variables and constraints compared to a nominal MPC.

Robust adaptive MPC methods for nonlinear systems can be found in~\cite{adetola2011robust}. 
Methods to explicitly incentivize or enforce learning of the unknown parameters can be found in~\cite{marafioti2014persistently,heirung2017dual,mesbah2018stochastic,RS_CDC_Dual_19}, which is, however, not the goal of this paper.

\IEEEpubidadjcol

\subsubsection*{Contribution}
In this work, we present a computationally efficient robust adaptive MPC scheme for linear uncertain systems. 
Similar to~\cite{lorenzen2019robust}, the proposed method uses a set-membership estimate for the parametric uncertainty, a robust constraint tightening, and a cost function based on a LMS point estimate. 
Correspondingly, the proposed robust adaptive MPC scheme shares the theoretical properties of the scheme in~\cite{lorenzen2019robust}, i.e.,  ensures robust constraint satisfaction and finite-gain $\mathcal{L}_2$ stability w.r.t. additive disturbances. 
Compared to~\cite{lorenzen2019robust}, we employ a moving window set-membership estimation to compute a hypercube that bounds the parametric uncertainty. 
Furthermore, we use a robust constraint tightening based on the novel robust MPC framework in~\cite{Robust_TAC_19}. 
As a result, the proposed  robust adaptive MPC scheme has a fixed computationally complexity, which is only moderately increased compared to a nominal MPC scheme. 
The main theoretical contribution of this paper is to extend the robust MPC method presented in~\cite{Robust_TAC_19} to allow for an online adaptation of the parametric uncertainty, while preserving the theoretical properties and computational efficiency.  
Compared to~\cite{lorenzen2019robust},  the proposed approach can be more conservative. On the other hand, it has a strongly reduced computational complexity. 
The trade off between computational complexity and conservatism in different robust MPC approaches is discussed in detail and is quantitatively investigated using a numerical example. 
The numerical example demonstrates the benefits of the proposed adaptive formulation compared to a robust formulation. 
Furthermore, the computational efficiency of the proposed formulation in comparison to~\cite{lorenzen2019robust} is substantiated. 
After submission of this manuscript, a competing approach  for robust adaptive MPC has been presented in~\cite{Lu2019RAMPC} based on the robust approach in~\cite{fleming2015robust}.
In terms of computational complexity and conservatism, the approach in~\cite{Lu2019RAMPC} is located between the proposed approach and the approach in~\cite{lorenzen2019robust}, which is demonstrated in the numerical comparison.

\subsubsection*{Outline}
Section~\ref{sec:setup} presents the problem setup and discusses the parameter estimation. 
Section~\ref{sec:main} presents the proposed robust adaptive MPC scheme with corresponding theoretical guarantees and a discussion regarding alternative robust MPC approaches.
Section~\ref{sec:num} demonstrates the benefits of the proposed formulation with a numerical example. 
Section~\ref{sec:sum} concludes the paper.
The Appendix contains additional details. 
The results in this paper are based on the thesis~\cite{Elisa}.
This paper is an extended version of the conference paper~\cite{Koehler2019Adaptive}, containing detailed proofs and additional results.

\subsubsection*{Notation}
The quadratic norm with respect to a positive definite matrix $Q=Q^\top$ is denoted by $\|x\|_Q^2=x^\top Q x$. 
Denote the unit hypercube by $\mathbb{B}_p:=\{\theta\in\mathbb{R}^p|~\|\theta\|_{\infty}\leq 0.5\}$.

\section{Problem setup and parameter estimation}
\label{sec:setup}
This section introduces the problem setup (Sec.~\ref{sec:setup_1}), the computation of set-membership estimates (Sec.~\ref{sec:setup_2}) and point estimates (Sec.~\ref{sec:setup_3}) for unknown constant parameters. 
\subsection{Setup}
\label{sec:setup_1}
We consider a discrete-time linear system
\begin{align}
\label{eq:sys}
x_{t+1}=A_{\theta}x_t+B_{\theta}u_t+d_t,
\end{align}
with state $x_t\in\mathbb{R}^n$, input $u_t\in\mathbb{R}^m$, additive disturbances $d_t\in\mathbb{R}^n$ and unknown but constant parameters $\theta=\theta^*\in\mathbb{R}^p$. 
\begin{assumption}
\label{ass:model}
The disturbance satisfies $d_t\in\mathbb{D}$ for all $t\geq 0$, with $\mathbb{D}=\{d\in\mathbb{R}^n|~H_d d\leq h_d\}$ compact.
 The system matrices depend affinely on the parameters $\theta\in\mathbb{R}^p$
\begin{align*}
A_{\theta}=A_0+\sum_{i=1}^p [\theta]_i A_i,\quad
B_{\theta}=B_0+\sum_{i=1}^p [\theta]_i B_i.
\end{align*}
There exists a known prior parameter set $\Theta_0^{HC}:=\overline{\theta}_0\oplus\eta_0\mathbb{B}_p$, $\eta_0\geq 0$  that contains the parameters $\theta^*$.
\end{assumption}
These conditions are rather standard in the context of linear uncertain systems. 
For simplicity, we consider $\Theta^{HC}_0$  to be a hypercube, however, any polytope of fixed shape can be employed. 
Extensions to time-varying parameters are discussed in Section~\ref{sec:discuss}. 
We consider mixed constraints on the state and input
\begin{align}
\label{eq:con}
(x_t,u_t)\in\mathcal{Z}, ~\forall t\geq 0,
\end{align}
with the compact polytope 
\begin{align*}
\mathcal{Z}=\{(x,u)\in\mathbb{R}^{n+m}|~F_jx+G_ju\leq 1,~ j=1,\dots, q\}.
\end{align*}
The control goal is to stabilize the origin, while satisfying the constraints~\eqref{eq:con} despite the disturbances $d_t$ and the uncertainty in the parameters $\theta$. 
\subsection{Set membership estimation}
\label{sec:setup_2}
In order to satisfy the constraints~\eqref{eq:con}, the uncertainty in the parameters $\theta$ needs to be taken into account, which in turn may lead to conservatism.  
To reduce this conservatism, a set membership estimation algorithm is used to compute a smaller set $\Theta_t^{HC}\subseteq\Theta_0^{HC}$ that contains the true parameters $\theta^*$, as done in~\cite{tanaskovic2014adaptive,lorenzen2019robust,Lu2019RAMPC}.   
%
Define
\begin{align*}
D(x,u):=[A_1x+B_1u,\dots, A_px+B_pu]\in\mathbb{R}^{n\times p}. 
\end{align*}
Given $(x_{t-1},u_{t-1},x_t)$, the non falsified parameter set is given by the polytope
\begin{align*}
\Delta_t:=&\{\theta\in\mathbb{R}^p|~x_t-A_{\theta}x_{t-1}-B_{\theta}u_{t-1} \in\mathbb{D}\}.
\end{align*}
The following algorithm uses  a given hypercube $\Theta_{t-1}^{HC}$ and the past $M\in\mathbb{N}$ sets $\Delta_{t-k}$ in a moving window fashion to compute a tighter non falsified hypercube $\Theta_t^{HC}$ using linear programming (LP). 
\begin{algorithm}[h]
\caption{Moving window hypercube update} 
\label{alg:HC}
Input: $\{\Delta_{k}\}_{k=t,\dots,t-M-1}$, $\Theta_{t-1}^{HC}$. Output: $\Theta_t^{HC}=\overline{\theta}_t\oplus\eta_t\mathbb{B}_p$
\begin{algorithmic}
\State Define polytope $\Theta_t^M:=\Theta_{t-1}^{HC}\bigcap_{k=t-M-1}^{t}\Delta_k$.
\State Solve $2 p$ LPs ($i=1,\dots,p$):  
\Statex ~$\theta_{i,t,\min}:=\min_{\theta\in\Theta_t^M}e_i^\top \theta$, $\theta_{i,t,\max}:=\max_{\theta\in\Theta_t^M}e_i^\top \theta$,
\Statex ~\text{with unit vector }$e_i=[0,\dots,1,\dots,0]\in\mathbb{R}^p,~[e_i]_i=1$.
\State Set $[\overline{\theta}_t]_i=0.5(\theta_{i,t,\min}+\theta_{i,t,\max})$.
\State Set $\eta_t=\max_i(\theta_{i,t,\max}-\theta_{i,t,\min})$.
\State Project: $\overline{\theta}_t$ on $\overline{\theta}_{t-1}\oplus (\eta_{t-1}-\eta_t)\mathbb{B}_p$.
\end{algorithmic}
\end{algorithm}
\begin{lemma}
\label{lemma:HC}
Let Assumption~\ref{ass:model} hold. 
For $t\geq 0$ the recursively updated sets $\Theta_t^{HC}$, $\Theta_t^M$ contain the true parameters $\theta^*$ and satisfy
$\Theta_t^M\subseteq\Theta_t^{HC}\subseteq\Theta_{t-1}^{HC}$.
\end{lemma}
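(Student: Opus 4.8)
The plan is to argue by induction over $t$, exploiting the structure of Algorithm~\ref{alg:HC}. The base case $t=0$ is exactly the prior bound $\theta^*\in\Theta_0^{HC}$ granted by Assumption~\ref{ass:model}. For the induction step I would assume $\theta^*\in\Theta_{t-1}^{HC}$ and establish all claimed properties at index $t$.

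First I would show $\theta^*\in\Delta_k$ for every $k$ in the moving window. Evaluating the dynamics~\eqref{eq:sys} at the true parameter gives $x_k-A_{\theta^*}x_{k-1}-B_{\theta^*}u_{k-1}=d_{k-1}\in\mathbb{D}$, which is precisely the condition defining $\Delta_k$. Together with the induction hypothesis $\theta^*\in\Theta_{t-1}^{HC}$ this yields $\theta^*\in\Theta_t^M=\Theta_{t-1}^{HC}\cap\bigcap_k\Delta_k$; in particular $\Theta_t^M\neq\emptyset$, so the $2p$ LPs are well posed. Since $\Theta_t^M\subseteq\Theta_{t-1}^{HC}$ by definition, each coordinate width $\theta_{i,t,\max}-\theta_{i,t,\min}$ is bounded by the width $\eta_{t-1}$ of $\Theta_{t-1}^{HC}$, hence $\eta_t\leq\eta_{t-1}$ and the projection set $\overline{\theta}_{t-1}\oplus(\eta_{t-1}-\eta_t)\mathbb{B}_p$ is nonempty.

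Next I would verify that the hypercube produced before the projection step, with center $\overline{\theta}_t'$ given by the coordinatewise midpoints and half-width $0.5\eta_t$, contains $\Theta_t^M$: for any $\theta\in\Theta_t^M$ and each $i$ one has $\theta_{i,t,\min}\leq[\theta]_i\leq\theta_{i,t,\max}$, and the midpoint choice together with $\eta_t\geq\theta_{i,t,\max}-\theta_{i,t,\min}$ gives $|[\theta]_i-[\overline{\theta}_t']_i|\leq 0.5\eta_t$. In particular $\theta^*$ lies in this pre-projection hypercube, so I now have the two memberships $\theta^*\in\overline{\theta}_t'\oplus\eta_t\mathbb{B}_p$ and $\theta^*\in\Theta_{t-1}^{HC}$.

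The main obstacle is the final projection step, which shifts the center and could in principle eject $\theta^*$. I would treat it coordinatewise, using that projection onto an axis-aligned box acts by clipping. Fix $i$ and abbreviate the old center coordinate $a=[\overline{\theta}_t']_i$, the new center $a'=[\overline{\theta}_t]_i=\mathrm{clip}(a,\,c-r,\,c+r)$ with $c=[\overline{\theta}_{t-1}]_i$ and $r=0.5(\eta_{t-1}-\eta_t)$. The two memberships give $|[\theta^*]_i-a|\leq 0.5\eta_t$ and $|[\theta^*]_i-c|\leq 0.5\eta_{t-1}=r+0.5\eta_t$. If no clipping occurs, then $a'=a$ and the bound $|[\theta^*]_i-a'|\leq 0.5\eta_t$ is immediate; if $a$ is clipped from above so that $a'=c+r$, the bound on $[\theta^*]_i-c$ supplies the upper estimate while the bound on $[\theta^*]_i-a$ (with $a>a'$) supplies the lower estimate, yielding $|[\theta^*]_i-a'|\leq 0.5\eta_t$ again, and lower clipping is symmetric. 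Hence $\theta^*\in\Theta_t^{HC}$, and applying the identical argument to each $\theta\in\Theta_t^M$ gives $\Theta_t^M\subseteq\Theta_t^{HC}$. Finally, since $\overline{\theta}_t$ lies in the projection set, $\|\overline{\theta}_t-\overline{\theta}_{t-1}\|_\infty\leq 0.5(\eta_{t-1}-\eta_t)$, and the triangle inequality in $\|\cdot\|_\infty$ yields $\Theta_t^{HC}=\overline{\theta}_t\oplus\eta_t\mathbb{B}_p\subseteq\overline{\theta}_{t-1}\oplus\eta_{t-1}\mathbb{B}_p=\Theta_{t-1}^{HC}$, closing the induction.
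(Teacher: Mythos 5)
Your proof is correct and follows essentially the same route as the paper's: the paper's Part I is your triangle-inequality step, its Part II is your clipping case analysis (its Case (i)/(ii) on whether the projection is active corresponds exactly to your clipped/unclipped cases, using containment in $\Theta_{t-1}^{HC}$ for one bound and the pre-projection hypercube for the other), and its Part III is your induction on $\theta^*\in\Delta_k$. The only difference is organizational — you wrap everything into a single induction, which has the minor benefit of making nonemptiness of $\Theta_t^M$ (hence well-posedness of the LPs) explicit — but the key ideas coincide.
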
 
\begin{proof}
We define the \textit{unique tight} hyperbox overapproximation of $\Theta_t^M$ as $\Theta_t^{HB}=\{\theta\in\mathbb{R}^p|~[\theta]_i\in[\theta_{i,t,\min},\theta_{i,t,\max}]\}$, which satisfies $\Theta_t^M\subseteq\Theta_t^{HB}$. 
Given that by definition $\Theta_t^M\subseteq\Theta_{t-1}^{HC}$ and the set $\Theta_t^{HB}$ is the \textit{unique} tight hyperbox overapproximation of $\Theta_t^M$, we have $\Theta_{t}^{HB}\subseteq\Theta_{t-1}^{HC}$ and thus $\eta_{t-1}\geq  \eta_t$. \\
\textbf{Part I. }The projection ensures 
\begin{align*}
\Theta_t^{HC}=&\overline{\theta}_t\oplus\eta_t\mathbb{B}_p\\
\subseteq& \overline{\theta}_{t-1}\oplus(\eta_{t-1}-\eta_{t})\mathbb{B}_p\oplus\eta_{t}\mathbb{B}_p=\Theta_{t-1}^{HC}.
\end{align*}
\textbf{Part II. }In the following, we show $\Theta_t^{HC}\supseteq\Theta_t^{HB}\supseteq\Theta_t^M$. 
This claim reduces to $[\theta_{i,t,\min},\theta_{i,t,\max}]\subseteq[\overline{\theta}]_i\oplus\eta_t[-0.5,0.5]$.
Without loss of generality, we only show $\theta_{i,t,\min}\geq [\overline{\theta}]_i-0.5\eta_t$, the case $\theta_{i,t,\max}\leq [\overline{\theta}]_i+0.5\eta_t$ is analogous.\\
\textbf{Case (i): }Suppose $0.5(\theta_{i,t,\min}+\theta_{i,t,\max})<[\overline{\theta}_{t-1}]_i-0.5(\eta_{t-1}-\eta_t)$ and thus
$[\overline{\theta}_t]_i=[\overline{\theta}_{t-1}]_i-0.5(\eta_{t-1}-\eta_t)$ due to the projection. 
Given that  $\Theta_{t}^{HB}\subseteq\Theta_{t-1}^{HC}$, we have 
$ \theta_{i,t,\min} \geq [\overline{\theta}_{t-1}]_i-0.5\eta_{t-1}=[\overline{\theta}_t]_i-0.5\eta_t$. \\
\textbf{Case (ii): }Suppose $0.5(\theta_{i,t,\min}+\theta_{i,t,\max})\geq [\overline{\theta}_{t-1}]_i-0.5(\eta_{t-1}-\eta_t)$, which implies $[\overline{\theta}_t]_i\leq 0.5(\theta_{i,t,\min}+\theta_{i,t,\max})$ (since the projection on the lower bound $\overline{\theta}_{t-1}-0.5(\eta_{t-1}-\eta_{t})$ is not active). 
Thus we have $\theta_{i,t,\min}=0.5(\theta_{i,t,\min}+\theta_{i,t,\max})-0.5(\theta_{i,t,\max}-\theta_{i,t,\min})\geq [\overline{\theta}_t]_i-0.5\eta_t$. \\
Combining both cases yields $\theta_{i,t,\min}\geq [\overline{\theta}_t]_i-0.5\eta_t$. \\
\textbf{Part III. } Assumption~\ref{ass:model} ensurs that $\theta^*\in\Delta_t$ for all $t\geq 0$. 
Suppose $\theta^*\in\Theta_t^{HC}$, then $\theta^*\in\Theta_{t+1}^M\subseteq\Theta_{t+1}^{HC}$. 
Thus, $\theta^*\in\Theta_0^{HC}$ (Ass.~\ref{ass:model}) ensures $\theta^*\in\Theta_t^M\subseteq\Theta_t ^{HC}$.
\end{proof}	
Algorithm~\ref{alg:HC} first computes the \textit{unique} (tight) hyperbox overapproximation ($\theta_{i,t,\min},\theta_{i,t,\max}$) and then an overapproximating hypercube. 
Since the overapproximating hypercube is not unique, the final projection is necessary to ensure $\Theta_t^{HC}\subseteq\Theta_{t-1}^{HC}$. 
Neglecting the scalar additions and the projection, this algorithm requires the solution of $2p$ LPs with $p$ optimization variables each. 
Thus, the computational demand of Algorithm~\ref{alg:HC} is typically small compared to the MPC optimization problem~\eqref{eq:MPC}.

Similar ideas for parameter estimation based on a moving window are also discussed in \cite[Remark~4]{lorenzen2019robust} and  \cite{chisci1998block}.
For comparison, in~\cite{ lorenzen2019robust}, the recursive update $\Theta_{t}=\Theta_{t-1}\cap\Delta_{t}$ (without any overapproximation) is used, which satisfies the same properties, compare~\cite[Lemma~2]{ lorenzen2019robust}. 
However, this corresponds to a full information filter, that has an online increasing (potentially unbounded) complexity, while in the proposed approach all operations have a fixed complexity. 
We use a hypercube to reduce the computational complexity of the robust adaptive MPC in Section~\ref{sec:main}, although in principle any polytope of fixed shape can be used.  
For general polytopes an overapproximation of fixed shape can be computed by replacing Algorithm~\ref{alg:HC} with more involved LPs, compare~\cite[Lemma~3]{ lorenzen2019robust}, \cite[Lemma~5]{Lu2019RAMPC}.

\subsection{Point estimate}
\label{sec:setup_3}
In order to improve the performance of the closed-loop system, the predicted cost function is evaluated based on a point estimate $\hat{\theta}$, as done in~\cite{aswani2013provably,di2016indirect, lorenzen2019robust}. 
In particular, we consider a least mean squares (LMS) point estimate with projection on the current parameter set. 
Given a point estimate $\hat{\theta}_{t}$, define the predicted state by $\hat{x}_{1|t}=A_{\hat{\theta}_{t}}x_t+B_{\hat{\theta}_{t}}u_t$ and the prediction error $\tilde{x}_{1|t}=x_{t+1}-\hat{x}_{1|t}$. 
The recursive LMS update is given by
\begin{align}
\label{eq:theta_LMS}
\hat{\theta}_t=\Pi_{\Theta_t^{HC}}(\hat{\theta}_{t-1}+\mu D(x_{t-1},u_{t-1})^\top \tilde{x}_{1|t-1}),
\end{align}
where $\Pi_{\Theta_t^{HC}}(\theta)=\arg\min_{\overline{\theta}\in\Theta_t^{HC}}\|\theta-\overline{\theta}\|$ denotes the Euclidean projection on the hypercube $\Theta_t^{HC}$ and $\mu>0$ is an update gain.
The update can equally be projected on the set $\Theta_t^M$, at the expense of additional computational complexity. 
\begin{lemma}
\label{lemma:LMS}
\cite[Lemma~5]{ lorenzen2019robust}
Let Assumption~\ref{ass:model} hold. 
Suppose that the parameter gain $\mu>0$ satisfies ${1}/{\mu}<\sup_{(x,u)\in\mathcal{Z}}\|D(x,u)\|^2$ and that the state and input satisfy $(x_t,u_t)\in\mathcal{Z}$ for all $t\geq 0$. 
Then for any initial parameter estimate $\hat{\theta}_0\in\Theta^{HC}_0$ and any time $k\in\mathbb{N}$, the parameter estimate~\eqref{eq:theta_LMS} satisfies
\begin{align*}
\sum_{t=0}^{k}\|\tilde{x}_{1|t}\|^2\leq \dfrac{1}{\mu}\|\hat{\theta}_0-\theta^*\|^2+\sum_{t=0}^{k}\|d_t\|^2. 
\end{align*}
\end{lemma}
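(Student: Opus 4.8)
The plan is to run a standard Lyapunov/dissipativity argument for the projected LMS recursion~\eqref{eq:theta_LMS}, using the parameter error $\tilde{\theta}_t := \hat{\theta}_t - \theta^*$ as a storage function. First I would rewrite the one-step prediction error in terms of $\tilde{\theta}_t$ and the disturbance. By the affine parametrization in Assumption~\ref{ass:model}, $A_\theta x + B_\theta u = A_0 x + B_0 u + D(x,u)\theta$, so with $D_t := D(x_t,u_t)$ one obtains
\begin{align*}
\tilde{x}_{1|t}=x_{t+1}-\hat{x}_{1|t}=D_t(\theta^*-\hat{\theta}_t)+d_t=-D_t\tilde{\theta}_t+d_t.
\end{align*}
Thus the quantity the recursion naturally controls is the noise-free model mismatch $D_t\tilde{\theta}_t$, with the disturbance $d_t$ entering additively; this is precisely why $d_t$ appears as the unmodeled term on the right-hand side of the claimed bound.

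Next I would exploit that the Euclidean projection onto the convex set $\Theta_t^{HC}$ is non-expansive and that, by Lemma~\ref{lemma:HC}, $\theta^*\in\Theta_t^{HC}$ for all $t$, so $\Pi_{\Theta_{t+1}^{HC}}(\theta^*)=\theta^*$. Applying this to~\eqref{eq:theta_LMS} removes the projection at the cost of an inequality,
\begin{align*}
\|\tilde{\theta}_{t+1}\|^2\leq\|\tilde{\theta}_t+\mu D_t^\top\tilde{x}_{1|t}\|^2=\|\tilde{\theta}_t\|^2+2\mu\,\tilde{\theta}_t^\top D_t^\top\tilde{x}_{1|t}+\mu^2\|D_t^\top\tilde{x}_{1|t}\|^2.
\end{align*}
For the cross term I would substitute $\tilde{\theta}_t^\top D_t^\top=(D_t\tilde{\theta}_t)^\top$ and $D_t\tilde{\theta}_t=d_t-\tilde{x}_{1|t}$, while for the quadratic term I would use $\|D_t^\top\tilde{x}_{1|t}\|^2\leq\|D_t\|^2\|\tilde{x}_{1|t}\|^2$ together with the gain condition (in the form $\mu\,\|D(x,u)\|^2\leq1$ on $\mathcal{Z}$, which is where I expect the stated restriction on $\mu$ to enter, and which requires $1/\mu\geq\sup_{(x,u)\in\mathcal{Z}}\|D(x,u)\|^2$) to bound $\mu^2\|D_t^\top\tilde{x}_{1|t}\|^2\leq\mu\|\tilde{x}_{1|t}\|^2$.

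Combining these and completing the square is the computational heart of the argument: the cross terms between disturbance and model error cancel exactly, leaving the per-step dissipation inequality
\begin{align*}
\|\tilde{\theta}_{t+1}\|^2\leq\|\tilde{\theta}_t\|^2-\mu\,\|D_t\tilde{\theta}_t\|^2+\mu\,\|d_t\|^2.
\end{align*}
Finally I would sum this telescoping inequality from $t=0$ to $k$, drop the nonnegative terminal term $\|\tilde{\theta}_{k+1}\|^2\geq0$, divide by $\mu$, and use $\hat{\theta}_0\in\Theta_0^{HC}$ to arrive at the stated $\mathcal{L}_2$-type bound.

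The step I expect to require the most care is twofold. First, the bookkeeping of the cross and quadratic terms must be precise enough that the disturbance--model-error cross terms cancel rather than merely yielding boundedness; a naive application of Young's inequality with equal weights destroys the dissipation, since the $\|\tilde{x}_{1|t}\|^2$ contributions then cancel against the gain and nothing is controlled. Second, and more conceptually, the recursion controls $\sum_t\|D_t\tilde{\theta}_t\|^2$, the disturbance-free prediction error, rather than the sum of the measured errors $\|\tilde{x}_{1|t}\|^2$; identifying the left-hand side with the model-mismatch component $D_t(\theta^*-\hat{\theta}_t)$ and attributing $d_t$ to the right-hand side is what makes the statement consistent, since otherwise the two sides would differ by cross terms in $d_t$ that cannot be signed.
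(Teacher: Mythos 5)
Your proposal is correct, and mathematically it is the argument the paper relies on; the difference is that the paper does not write it out. The paper's own proof consists solely of your projection step --- non-expansiveness of $\Pi_{\Theta_{t+1}^{HC}}$ together with $\theta^*\in\Theta_{t+1}^{HC}$ from Lemma~\ref{lemma:HC}, so that projecting onto $\Theta_t^{HC}$ instead of the set used in \cite{lorenzen2019robust} is harmless --- and then defers everything else to \cite[Lemma~5]{lorenzen2019robust}. Your write-up fills in that deferred part (expansion of $\|\tilde{\theta}_t+\mu D_t^\top\tilde{x}_{1|t}\|^2$, exact cancellation of the disturbance cross terms, the per-step dissipation inequality, telescoping), which is precisely the standard projected-LMS analysis behind the cited lemma. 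In doing so you also correctly resolve two imprecisions in the statement as printed. First, the gain condition must read $1/\mu\geq\sup_{(x,u)\in\mathcal{Z}}\|D(x,u)\|^2$, as it does in Theorem~\ref{thm:main}; the reversed inequality $1/\mu<\sup_{(x,u)\in\mathcal{Z}}\|D(x,u)\|^2$ in the lemma is a typo, since $\mu\|D(x_t,u_t)\|^2\leq 1$ is exactly what absorbs the quadratic term. Second, as you observe, the quantity the dissipation inequality controls is the disturbance-free mismatch $D(x_t,u_t)(\theta^*-\hat{\theta}_t)$, not the measured residual $x_{t+1}-\hat{x}_{1|t}$ as $\tilde{x}_{1|t}$ is defined in Section~\ref{sec:setup_3}: under the measured-error reading the claimed bound is false in general, because the cross terms $d_t^\top D(x_t,u_t)(\theta^*-\hat{\theta}_t)$ cannot be signed (a two-step scalar example with $\hat{\theta}_0=\theta^*$ and $d_1=-d_0$ already violates it), whereas your noise-free reading is the one for which the inequality holds and is what the cited result establishes. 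So your proof both reproduces the intended argument in full and pins down the only reading of the statement under which it is true.
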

\begin{proof}
This property is actually equivalent to the result in~\cite[Lemma~5]{ lorenzen2019robust}.
The only difference is that we project the point estimate on $\Theta_t^{HC}$. 
Lemma~\ref{lemma:HC} ensures that the true parameters satisfy $\theta^*\in\Theta_t^{HC}$. 
As a result, the projection does not increase the parameter estimation error (non-expansiveness of the projection operator). 
The remainder of the proof is analogous to~\cite[Lemma~5]{ lorenzen2019robust}.
\end{proof}
Alternative approaches to compute such a point estimate include Kalman filter and recursive least squares (RLS), compare~\cite{lennart1999system}. 

\section{Robust adaptive MPC}
\label{sec:main}
This section contains the proposed robust adaptive MPC approach based on the parameter estimates in Sections~\ref{sec:setup_2} and \ref{sec:setup_3}.  
Section~\ref{sec:main_1} discusses some preliminaries regarding polytopic tubes.
The proposed MPC scheme is presented in Section~\ref{sec:main_2} and  the overall online and offline computations are summarized.
The theoretical analysis is detailed in Section~\ref{sec:main_3}.
Section~\ref{sec:discuss} discusses the complexity and conservatism compared to existing robust (adaptive) MPC approaches, and elaborates on variations and extensions. 
%
\subsection{Polytopic tubes for mixed uncertainty}
\label{sec:main_1}
In the following, we discuss how to propagate the uncertainty of the additive disturbances $d_t\in \mathbb{D}$ and the parametric uncertainty $\Theta_t^{HC}$ using a polytopic tube and the robust MPC approach in~\cite[Prop.~1]{Robust_TAC_19}. 
We consider a standard quadratic stage cost $\ell(x,u)=\|x\|_Q^2+\|u\|_R^2$, with $Q,R$ positive definite. 
As standard in tube based robust MPC, we require the following stabilizability assumption. 
\begin{assumption}
\label{ass:stable}
Consider the prior parameter set $\Theta_0^{HC}$ from Assumption~\ref{ass:model}. 
There exist a feedback $K\in\mathbb{R}^{m\times n}$ and a positive definite matrix $P$, such that $A_{cl,\theta}:=A_{\theta}+B_{\theta}K$ is quadratically stable and satisfies
\begin{align}
\label{eq:P}
A_{cl,\theta}^\top P A_{cl,\theta}+Q+K^\top R K \preceq P,
\end{align}
for all $\theta\in\Theta_0^{HC}$. 
\end{assumption}
These matrices $P,K$ can be computed using standard robust control methods based on linear matrix inequalities (LMIs), compare e.g. Appendix~\ref{App:tube}. 
In the following, we only consider the pre-stabilized dynamics $A_{cl,\theta}$ with $u_t=Kx_t+v_t$ and the new input $v\in\mathbb{R}^m$.

The idea of tube-based robust MPC is to online predict a tube around the nominal predicted trajectory that contains all possible uncertain trajectories for different realizations of the disturbances $d\in\mathbb{D}$ and the parameters $\theta$. 
To ensure a fixed computational complexity, this tube is typically based on a fixed offline computed polytope, which is translated and scaled online, compare~\cite{kouvaritakis2016model,fleming2015robust,Robust_TAC_19,rakovic2012homothetic}.
To this end, we consider some compact polytope
\begin{align}
\label{eq:polytope}
\mathcal{P}=\{x\in\mathbb{R}^n|~H_i x\leq 1,i=1,\dots,r\},
\end{align}
the complexity of which determines the complexity of the robust MPC approach. 
Given a point estimate $\theta\in\mathbb{R}^p$, the contraction rate $\rho_{\theta}$ of the polytope $\mathcal{P}$ can be computed using the following LP
\begin{align}
\label{eq:rho}
\rho_{\theta}:=\max_i \max_{x\in\mathcal{P}} H_i A_{cl,\theta}x,
\end{align}
compare~\cite[Thm.~4.1]{blanchini1999set}.
In the considered robust MPC approach (cf.~\cite[Prop.~1]{Robust_TAC_19}), this contraction rate $\rho_{\theta}$ determines the growth of the tube (and hence the conservatism).
Standard methods to compute suitable polytopes are  the minimal robust positive invariant (RPI) set~\cite{kouramas2005minimal,rakovic2005minimal}, the maximal invariant/contractive set~\cite{pluymers2005efficient}, \cite{blanchini2008set} or maximal RPI set \cite{kerrigan2001robust},~\cite[Alg.~4.3]{fleming2016robust}.

Given the polytope $\mathcal{P}$ and feedback $K$, we compute the following constant based on an LP
\begin{align}
\label{eq:L_Theta}
L_{\mathbb{B}}:=&\max_{i,l}\max_{x\in\mathcal{P}} H_i D(x,Kx) \tilde{e}_l,
\end{align}
where $\tilde{e}_l$ denote the $2^p$ vertices of the unit hypercube $\mathbb{B}_p$. 
The following proposition shows that this constant can be used to bound the possible change in the contraction rate $\rho_{\theta}$.
\begin{proposition}
\label{prop:cont_rho}
Given parameters $\theta,\tilde{\theta}$ with $\tilde{\theta}-\theta=\Delta \theta\in\eta\mathbb{B}_p$, the following inequality holds 
\begin{align}
\label{eq:cont_rho}
\rho_{\tilde{\theta}}\leq \rho_{\theta}+\eta L_{\mathbb{B}}.
\end{align}
\end{proposition}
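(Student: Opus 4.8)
The plan is to exploit the affine dependence of the system matrices on $\theta$ to express the perturbed closed-loop matrix as the original one plus a correction that is exactly captured by $D(x,Kx)$. First I would use Assumption~\ref{ass:model} to write $A_{cl,\theta}=A_0+B_0K+\sum_{i=1}^p[\theta]_i(A_i+B_iK)$, so that $A_{cl,\tilde\theta}=A_{cl,\theta}+\sum_{i=1}^p[\Delta\theta]_i(A_i+B_iK)$. The key observation is that by the definition of $D$ we have $D(x,Kx)=[(A_1+B_1K)x,\dots,(A_p+B_pK)x]$, and hence the pointwise identity $A_{cl,\tilde\theta}x=A_{cl,\theta}x+D(x,Kx)\Delta\theta$ holds for every $x$. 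This identity is the crux of the argument: it isolates the entire $\tilde\theta$-dependence of the dynamics into a term that is linear in $\Delta\theta$.

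Next I would substitute this identity into the definition~\eqref{eq:rho} of $\rho_{\tilde\theta}$ and split the objective using subadditivity of the maximum, applied both over $x\in\mathcal{P}$ and over the facet index $i$:
\begin{align*}
\rho_{\tilde\theta}
&=\max_i\max_{x\in\mathcal{P}}\left(H_iA_{cl,\theta}x+H_iD(x,Kx)\Delta\theta\right)\\
&\leq\rho_{\theta}+\max_i\max_{x\in\mathcal{P}}H_iD(x,Kx)\Delta\theta.
\end{align*}
It then remains only to bound the perturbation term by $\eta L_{\mathbb{B}}$.

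To this end I would write $\Delta\theta=\eta\delta$ with $\delta\in\mathbb{B}_p$. For every fixed $i$ and $x$, the scalar $H_iD(x,Kx)\delta$ is linear in $\delta$, so its maximum over the hypercube $\mathbb{B}_p$ is attained at one of the vertices $\tilde e_l$; consequently $H_iD(x,Kx)\delta\leq\max_l H_iD(x,Kx)\tilde e_l$. Taking the maximum over $x\in\mathcal{P}$ and over $i$ and recalling the definition~\eqref{eq:L_Theta} yields $\max_i\max_{x\in\mathcal{P}}H_iD(x,Kx)\Delta\theta\leq\eta L_{\mathbb{B}}$, which combined with the previous display gives~\eqref{eq:cont_rho}.

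I do not expect a substantial obstacle here; the only step requiring genuine care is the first one, namely verifying that the $x$-dependent correction in $A_{cl,\tilde\theta}x$ coincides exactly with $D(x,Kx)\Delta\theta$. It is precisely this decoupling that lets the perturbation of the contraction rate be bounded uniformly in $x$ by a vertex enumeration, turning what could have been a semi-infinite bound into the single constant $L_{\mathbb{B}}$ defined via the finitely many vertices $\tilde e_l$.
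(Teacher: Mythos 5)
Your proof is correct and follows essentially the same route as the paper's: both exploit the pointwise identity $A_{cl,\tilde{\theta}}x=A_{cl,\theta}x+D(x,Kx)\Delta\theta$, split the maximum in~\eqref{eq:rho} by subadditivity, and bound the perturbation term by reducing the linear maximization over $\eta\mathbb{B}_p$ to its vertices, which is exactly the constant $\eta L_{\mathbb{B}}$ from~\eqref{eq:L_Theta}. No gap to report.
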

\begin{proof}
The property follows directly using
\begin{align*}
\rho_{\tilde{\theta}}\stackrel{\eqref{eq:rho}}{=}&\max_{i} \max_{x\in\mathcal{P}}H_i A_{cl,\tilde{\theta}}x\\
=&\max_i\max_{x\in\mathcal{P}} H_i (A_{cl,\theta}x+D(x,Kx)\Delta \theta)\\
\leq &\max_i\max_{x\in\mathcal{P}} H_i A_{cl,\theta}x+\max_i\max_{x\in\mathcal{P}} H_i D(x,Kx)\Delta \theta\\
\stackrel{\eqref{eq:rho}}{\leq} &\rho_{\theta}+\max_i\max_{ \theta_2\in\eta\mathbb{B}_p}\max_{x\in\mathcal{P}} H_ i D(x,Kx)\theta_2\\
= &\rho_{\theta}+\eta\max_{i,j}\max_{x\in\mathcal{P}} H_i D(x,Kx)\tilde{e}_j
\stackrel{\eqref{eq:L_Theta}}{=}\rho_{\theta}+\eta L_{\mathbb{B}},
\end{align*}
where the last inequality uses the fact that the function is linear in $\theta$ and $\Delta \theta\in \eta \mathbb{B}_p$.
\end{proof}
The impact of the additive disturbances is bounded with the constant $\overline{d}$, which is computed with the following LP: 
\begin{align}
\label{eq:d}
\overline{d}:=&\max_i\max_{d\in\mathbb{D}}H_i d.
\end{align}
In order to quantify the parametric uncertainty at some point $(z,v)\in\mathcal{Z}$, we define the following function in dependence of the size $\eta$ of the hypercube $\Theta^{HC}$ (cf.~\cite[Ass.~5, Prop.~2]{Robust_TAC_19}) 
\begin{align}
\label{eq:w_eta}
w_{\eta}(z,v):=&\eta\max_{i,l}H_i D(z,v)\tilde{e}_l.
\end{align}
\begin{proposition}
\label{prop:cont_w}
For any $(x,z,v)\in\mathbb{R}^{2n+m}$, $\eta\geq 0$, the function $w_{\eta}$ satisfies the following inequality 
\begin{align}
\label{eq:Lipschitz}
w_{\eta}(x,v+K(x-z))\leq w_ {\eta}(z,v)+\eta L_{\mathbb{B}}\max_i H_i (x-z).
\end{align}
\end{proposition}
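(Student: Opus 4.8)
The plan is to exploit that the map $(x,u)\mapsto D(x,u)$ is linear in both arguments, so that the shifted evaluation on the left-hand side of~\eqref{eq:Lipschitz} decomposes additively; after splitting the resulting maximum I would bound the excess term by a homogeneity/scaling argument that invokes the definition~\eqref{eq:L_Theta} of $L_{\mathbb{B}}$.

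First I would make the linearity of $D$ explicit: each column $A_ix+B_iu$ is linear in $(x,u)$, hence $D(x_1+x_2,u_1+u_2)=D(x_1,u_1)+D(x_2,u_2)$. Decomposing the argument as $(x,v+K(x-z))=(z,v)+(x-z,K(x-z))$ then gives
\begin{align*}
D(x,v+K(x-z))=D(z,v)+D(x-z,K(x-z)).
\end{align*}
Multiplying by $\eta\geq 0$, taking $\max_{i,l}$ over the index set from~\eqref{eq:w_eta}, and using subadditivity of the maximum, $\max_{i,l}(a_{i,l}+b_{i,l})\leq\max_{i,l}a_{i,l}+\max_{i,l}b_{i,l}$, yields
\begin{align*}
w_{\eta}(x,v+K(x-z))\leq w_{\eta}(z,v)+\eta\max_{i,l}H_iD(x-z,K(x-z))\tilde{e}_l,
\end{align*}
so that it only remains to bound the last term by $\eta L_{\mathbb{B}}\max_iH_i(x-z)$.

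For this final term I would set $\xi:=x-z$ and $\lambda:=\max_iH_i\xi$, and use that $\xi\mapsto D(\xi,K\xi)$ is (degree-one) homogeneous. Since $\mathcal{P}=\{x\mid H_ix\leq 1,\ i=1,\dots,r\}$ from~\eqref{eq:polytope} is a compact polytope with $0\in\mathcal{P}$, one has $\lambda\geq 0$, with $\lambda=0$ only for $\xi=0$ (otherwise $H_i\xi\leq 0$ for all $i$ would give $t\xi\in\mathcal{P}$ for all $t>0$, contradicting boundedness). The case $\xi=0$ is trivial, as both sides vanish. For $\lambda>0$ the rescaled vector $\xi/\lambda$ satisfies $H_i(\xi/\lambda)\leq 1$ for all $i$, hence $\xi/\lambda\in\mathcal{P}$, and homogeneity together with~\eqref{eq:L_Theta} gives
\begin{align*}
H_iD(\xi,K\xi)\tilde{e}_l=\lambda H_iD(\xi/\lambda,K(\xi/\lambda))\tilde{e}_l\leq\lambda L_{\mathbb{B}}.
\end{align*}
Taking $\max_{i,l}$ and substituting $\lambda=\max_iH_i(x-z)$ produces exactly the claimed bound.

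The main obstacle I anticipate is the scaling step: one must verify that the normalizing factor $\lambda=\max_iH_i(x-z)$ is nonnegative and that the rescaled argument lands in $\mathcal{P}$, which relies on compactness of $\mathcal{P}$ to rule out $\lambda<0$ and on treating $\xi=0$ separately. By contrast, the additive decomposition and the subadditivity of the maximum are routine once the linearity of $D$ has been recorded.
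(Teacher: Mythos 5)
Your proof is correct and follows essentially the same route as the paper's: decompose $D(x,v+K(x-z))=D(z,v)+D(x-z,K(x-z))$ by linearity, split the maximum, and bound the residual term by rescaling $x-z$ into $\mathcal{P}$ and invoking the definition~\eqref{eq:L_Theta} of $L_{\mathbb{B}}$. The only difference is that you handle the edge cases explicitly (the case $x=z$ and the positivity of $\max_i H_i(x-z)$ via compactness of $\mathcal{P}$), which the paper's one-line scaling inequality glosses over.
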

\begin{proof}
First, note that the following inequality holds for any function $f:\mathbb{R}^n\rightarrow\mathbb{R}$ linear in $x$, any $x\neq 0$ and $H_i$ from the compact polytope $\mathcal{P}$ in~\eqref{eq:polytope}:
\begin{align}
\label{eq:w_eta_bound_app_help}
f(x)=f(x)\dfrac{\max_k H_k x}{\max_i H_i x}\leq \max_k H_k x \max_{\Delta x\in\mathcal{P}} f(\Delta x).
\end{align}
The claim follows using~\eqref{eq:w_eta_bound_app_help} with the linear function $H_i D(x,Kx)\tilde{e}_l$:
\begin{align*}
&w_{\eta}(x,v+K(x-z))
=\eta\max_{i,l}H_i D(x,v+K(x-z))\tilde{e}_l\\
\leq &\eta\max_{i,l} H_i D(z,v)\tilde{e}_l+\eta\max_{i,l}H_iD((x-z),K(x-z))\tilde{e}_l\\
\stackrel{\eqref{eq:w_eta_bound_app_help}}{\leq }& \tilde{w}_{\eta}(z,v)+\eta \max_k H_k(x-z) \max_{i,l}\max_{\Delta x\in\mathcal{P}} H_iD(\Delta x,K\Delta x)\tilde{e}_l\\
\stackrel{\eqref{eq:L_Theta}}{=}&\tilde{w}_{\eta}(z,v)+\eta L_{\mathbb{B}}\max_i H_i(x-z).
\end{align*}
\end{proof}
Furthermore, for each constraint~\eqref{eq:con} we compute a constant $c_j$ using the following LP:
\begin{align}
\label{eq:c_j}
c_j:=\max_{x\in\mathcal{P}}[F+GK]_j x,~j=1,\dots,q.
\end{align}

%
\subsection{Proposed robust adaptive MPC scheme}
\label{sec:main_2}
Given the constants $c_j$ $\rho$, $L_{\mathbb{B}}$ and the function $w_{\eta}$, we can state the proposed MPC scheme.
At each time $t$, given a state $x_t$, a hypercube $\Theta_t^{HC}=\overline{\theta}_t\oplus\eta_t\mathbb{B}_p$ and a LMS point estimate $\hat{\theta}_t$, we solve the following linearly constrained QP
\begin{subequations}
\label{eq:MPC}
\begin{align}
&\min_{v_{\cdot|t},w_{\cdot|t}}\sum_{k=0}^{N-1}\|\hat{x}_{k|t}\|_Q^2+\|\hat{u}_{k|t}\|_R^2+\|\hat{x}_{N|t}\|_P^2\nonumber\\
\text{s.t. }&\overline{x}_{0|t}=\hat{x}_{0|t}=x_t,~s_{0|t}=0,\\
\label{eq:dyn_nom}
&\overline{x}_{k+1|t}=A_{cl,\overline{\theta}_t}\overline{x}_{k|t}+B_{\overline{\theta}_t}v_{k|t},\\
\label{eq:dyn_hat}
&\hat{x}_{k+1|t}=A_{cl,\hat{\theta}_t}\hat{x}_{k|t}+B_{\hat{\theta}_t}v_{k|t},\\
\label{eq:s_dyn}
&s_{k+1|t}=\rho_{\overline{\theta}_t}s_{k|t}+w_{k|t},\\
\label{eq:w_scalar}
&w_{k|t}\geq \overline{d}+\eta_t(L_{\mathbb{B}}s_{k|t}+H_i D(\overline{x}_{k|t},\overline{u}_{k|t})\tilde{e}_l),\\
\label{eq:con_tightend}
&F_j\overline{x}_{k|t}+G_j\overline{u}_{k|t}+c_js_{k|t}\leq 1,\\
\label{eq:u}
&\overline{u}_{k|t}=v_{k|t}+K\overline{x}_{k|t},\quad 
\hat{u}_{k|t}=v_{k|t}+K\hat{x}_{k|t},\\
\label{eq:con_term}
&(\overline{x}_{N|t},s_{N|t})\in\mathcal{X}_f,\\
&j=1,\dots,q,~ k=0,\dots,N-1,\nonumber\\
&~i=1,\dots r,~ l=1,\dots,2^p,\nonumber
\end{align}
\end{subequations}
where $\mathcal{X}_f$ is a terminal set to be specified later. 
The solution to~\eqref{eq:MPC} is denoted by $\overline{x}^*_{\cdot|t}$, $\hat{x}^*_{\cdot|t}$, $v^*_{\cdot|t}$, $\hat{u}^*_{\cdot|t}$, $\overline{u}^*_{\cdot|t}$, $w^*_{\cdot|t}$, $s^*_{\cdot|t}$. 
The proposed scheme ensures robust constraint satisfaction by predicting a polytopic tube $\mathbb{X}_{k|t}=\{z|~H_i(z-\overline{x}_{k|t})\leq s_{k|t}\}$ that contains all possible future trajectories of the uncertain system~\eqref{eq:sys} subject to the input trajectory $v_{\cdot|t}$. 
The uncertainty propagation is achieved with  the scalar tube dynamics of $s$~\eqref{eq:s_dyn}, and the construction of $w$ in~\eqref{eq:w_scalar}, which overapproximates the uncertainty of all states and inputs within the tube $\mathbb{X}_{k|t}$, compare~\cite[Prop.~2]{Robust_TAC_19}. 

Compared to the robust MPC scheme in~\cite{Robust_TAC_19}, the tube propagation depends on the online set estimate $\Theta_t^{HC}$, thus reducing the conservatism. 
Furthermore, similar to~\cite{aswani2013provably,di2016indirect,lorenzen2019robust} a second predicted trajectory $\hat{x}_{k|t}\in\mathbb{X}_{k|t}$ based on the LMS point estimate $\hat{\theta}_t$ (c.f.~\eqref{eq:dyn_hat}) is used to evaluate the predicted cost function, which improves performance. 

Compared to a nominal MPC scheme, the evaluation of the uncertainty $w_{\eta}(x,u)$~\eqref{eq:w_eta} along the prediction horizon $N$  introduces $N\cdot r\cdot 2^p$ additional linear inequality constraints~\eqref{eq:w_scalar}.  
The computational complexity and conservatism in comparison to other robust MPC methods is discussed in Remark~\ref{rk:robust_methods} and investigated in the numerical example in Section~\ref{sec:num}.

The overall offline and online computations are summarized
in Algorithm~\ref{alg:adaptive_offline} and \ref{alg:adaptive_online}, respectively.
\begin{algorithm}[h]
\caption{Robust adaptive MPC - Offline}
\label{alg:adaptive_offline}
\begin{algorithmic}
\State Given model, parameter set $\Theta_0^{HC}$ (Ass.~\ref{ass:model}), constraints~\eqref{eq:con}.
\State Compute feedback $K$, terminal cost $P$ (Ass.~\ref{ass:stable}). 
\State Set parameter update gain $\mu>0$ (Lemma~\ref{lemma:LMS}).
\State Design polytope $\mathcal{P}$~\eqref{eq:polytope}. 
\State Compute $\rho_{\overline{\theta}_0}$, $L_{\mathbb{B}}$, $\overline{d}$, $c_j$ using LPs \eqref{eq:rho}, \eqref{eq:L_Theta}, \eqref{eq:d}, \eqref{eq:c_j}.
\State Check if condition~\eqref{eq:term_cond} in Prop.~\ref{prop:term} holds. 
\end{algorithmic}
\end{algorithm}
\begin{algorithm}[h]
\caption{Robust adaptive MPC - Online}
\label{alg:adaptive_online}
Execute at each time step $t\in\mathbb{N}$:
\begin{algorithmic}
\State Measure state $x_t$.
\State Update $\Theta_t^{HC}$ using Algorithm~\ref{alg:HC}. 
\State Update $\hat{\theta}_t$ using~\eqref{eq:theta_LMS}. 
\State Update $\rho_{\overline{\theta}_t}$ using~\eqref{eq:rho}. 
\State Solve MPC optimization problem~\eqref{eq:MPC}.
\State Apply control input $u_t=v^*(0|t)+Kx_t$.
\end{algorithmic}
\end{algorithm}

%
\subsection{Theoretical analysis}
\label{sec:main_3}
In the following, we detail the theoretical analysis and provide the technical conditions on the terminal set.

\subsubsection*{Terminal set}
The following assumption captures the desired properties of the terminal ingredients. 
\begin{assumption}\label{ass:term}
Consider the set $\Theta_0$ and matrix $K$ from Assumptions~\ref{ass:model} and \ref{ass:stable}. 
There exists a terminal region $\mathcal{X}_{f}\subset\mathbb{R}^{n+1}$ such that the following properties hold 
\begin{description}
\item[] for all $\Theta^{HC}=\bar{\theta}\oplus\eta\mathbb{B}_p\subseteq\Theta_0^{HC},$
\item[] for all $(x,s)\in\mathcal{X}_{f}$,
\item[] for all $\tilde{s}\in\mathbb{R}_{\geq0}$,
\item[] for all $s^+\in [0,(\rho_{\overline{\theta}}+\eta L_{\mathbb{B}})s+w_{\eta}(x,Kx)+\overline{d}-\tilde{s}]$
\item[] for all $x^+\;\textrm{s.t.}\;\max_{i}H_i(x^+-A_{cl,\overline{\theta}}x)\leq \tilde{s}$:
\end{description}
\begin{subequations}
\begin{align}
\label{eq:ass_term_2}
	&(x^+,s^+)\in\mathcal{X}_{f},\\
\label{eq:ass_term_3}
	&[F+GK]_jx+c_js\leq 1, \quad j=1,...,q.	
\end{align}
\end{subequations}
\end{assumption}
Condition~\eqref{eq:ass_term_2} ensures robust positive invariance of the terminal region, and condition~\eqref{eq:ass_term_3} ensures that the tightened state and input constraints are satisfied within the terminal region. 
The following proposition provides a simple polytopic terminal set constraint $\mathcal{X}_f$.
\begin{proposition}
\label{prop:term}
Let Assumption~\ref{ass:model} and \ref{ass:stable} hold.
Suppose that the following condition holds 
\begin{align}
\label{eq:term_cond}
\rho_{\overline{\theta}_0}+\eta_0L_{\mathbb{B}}+c_{\max}\overline{d}\leq 1,
\end{align}
with $c_{\max}=\max_jc_j$. 
Then the polytopic terminal set 
\begin{align}
\label{eq:term_set}
\mathcal{X}_f=\{(x,s)\in\mathbb{R}^{n+1}|~c_{\max}(s+H_ix)\leq 1,~i=1,\dots,r\},
\end{align}
satisfies Assumption~\ref{ass:term}. 
\end{proposition}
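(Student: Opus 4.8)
The plan is to check the two defining properties of Assumption~\ref{ass:term}---the tightened constraint~\eqref{eq:ass_term_3} and the invariance~\eqref{eq:ass_term_2}---for the candidate set~\eqref{eq:term_set}. The workhorse throughout is the homogeneity estimate~\eqref{eq:w_eta_bound_app_help}, which bounds any map that is linear in $x$ by $f(x)\le(\max_k H_kx)\max_{\Delta x\in\mathcal{P}}f(\Delta x)$; this is legitimate because $\mathcal{P}$ is compact with the origin in its interior, so that $\max_k H_kx\ge 0$ for every $x$.

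First I would dispatch the constraint condition~\eqref{eq:ass_term_3}. Applying~\eqref{eq:w_eta_bound_app_help} to $f(x)=[F+GK]_jx$ and recalling the definition~\eqref{eq:c_j} of $c_j$ gives $[F+GK]_jx\le c_j\max_k H_kx$. Using $c_j\le c_{\max}$ together with $\max_k H_kx\ge 0$ and $s\ge 0$ yields $[F+GK]_jx+c_js\le c_{\max}(\max_k H_kx+s)$, and the right-hand side is at most $1$ directly from $(x,s)\in\mathcal{X}_f$ in~\eqref{eq:term_set}.

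For the invariance~\eqref{eq:ass_term_2} I would bound the tube centre and the tube size separately and then recombine. From the hypothesis $\max_i H_i(x^+-A_{cl,\overline{\theta}}x)\le\tilde s$ and~\eqref{eq:w_eta_bound_app_help} applied to $f(x)=H_iA_{cl,\overline{\theta}}x$ together with~\eqref{eq:rho}, I expect $H_ix^+\le\rho_{\overline{\theta}}\max_k H_kx+\tilde s$; applying~\eqref{eq:w_eta_bound_app_help} to the maps $x\mapsto H_iD(x,Kx)\tilde{e}_l$ and~\eqref{eq:L_Theta} gives $w_\eta(x,Kx)\le\eta L_{\mathbb{B}}\max_k H_kx$. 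Inserting both into the largest admissible $s^+=(\rho_{\overline{\theta}}+\eta L_{\mathbb{B}})s+w_\eta(x,Kx)+\overline{d}-\tilde s$, the two $\tilde s$ contributions cancel, and collecting terms should give $s^++H_ix^+\le(\rho_{\overline{\theta}}+\eta L_{\mathbb{B}})(s+\max_k H_kx)+\overline{d}$. Since the membership test for $\mathcal{X}_f$ is monotone in $s^+$, verifying this worst case (the upper interval bound) is enough.

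The step I expect to be the main obstacle is passing from the rate $\rho_{\overline{\theta}}+\eta L_{\mathbb{B}}$ of the current set $\Theta^{HC}=\overline{\theta}\oplus\eta\mathbb{B}_p$ to the prior worst case $\rho_{\overline{\theta}_0}+\eta_0 L_{\mathbb{B}}$ appearing in~\eqref{eq:term_cond}, since the condition is only imposed on $\Theta_0^{HC}$. Here I would exploit $\Theta^{HC}\subseteq\Theta_0^{HC}$ to read off $\eta\le\eta_0$ and, by comparing the box corners coordinatewise, $\overline{\theta}-\overline{\theta}_0\in(\eta_0-\eta)\mathbb{B}_p$; Proposition~\ref{prop:cont_rho} then gives $\rho_{\overline{\theta}}\le\rho_{\overline{\theta}_0}+(\eta_0-\eta)L_{\mathbb{B}}$, so that $\rho_{\overline{\theta}}+\eta L_{\mathbb{B}}\le\rho_{\overline{\theta}_0}+\eta_0 L_{\mathbb{B}}$. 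Multiplying the previous bound by $c_{\max}$, bounding $c_{\max}(s+\max_k H_kx)\le 1$ via $(x,s)\in\mathcal{X}_f$, and finally invoking~\eqref{eq:term_cond} yields $c_{\max}(s^++H_ix^+)\le\rho_{\overline{\theta}_0}+\eta_0 L_{\mathbb{B}}+c_{\max}\overline{d}\le 1$ for every $i$, which is exactly $(x^+,s^+)\in\mathcal{X}_f$ and closes the argument.
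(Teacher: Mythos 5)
Your proposal is correct and follows essentially the same route as the paper's proof: the tightened constraints via the definition of $c_j$ in~\eqref{eq:c_j}, the invariance bound by combining $H_ix^+\leq\rho_{\overline{\theta}}\max_k H_kx+\tilde{s}$ with $w_\eta(x,Kx)\leq\eta L_{\mathbb{B}}\max_k H_kx$ (the paper obtains the latter from Proposition~\ref{prop:cont_w} with $(z,v)=(0,0)$, which is exactly your direct application of~\eqref{eq:w_eta_bound_app_help}), and the same reduction $\rho_{\overline{\theta}}+\eta L_{\mathbb{B}}\leq\rho_{\overline{\theta}_0}+\eta_0 L_{\mathbb{B}}$ via $\overline{\theta}-\overline{\theta}_0\in(\eta_0-\eta)\mathbb{B}_p$ and Proposition~\ref{prop:cont_rho}.
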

\begin{proof}
Note that $\Theta^{HC}\subseteq\Theta_0^{HC}$ (Lemma~\ref{lemma:HC}) implies $\overline{\theta}\subseteq\overline{\theta}_0\oplus(\eta_0-\eta)\mathbb{B}_p$. 
Thus, satisfaction of~\eqref{eq:term_cond} implies satisfaction of~\eqref{eq:term_cond} with $\eta_0,\overline{\theta}_0$ replaced by $\eta,\overline{\theta}$, by using Proposition~\ref{prop:cont_rho} and
\begin{align*}
	\rho_{\overline{\theta}}+\eta L_{\mathbb{B}}
\stackrel{\eqref{eq:cont_rho}}{\leq}\rho_{\overline{\theta}_0}+(\eta_0-\eta)L_{\mathbb{B}}+\eta L_{\mathbb{B}}=\rho_{\overline{\theta}_{0}}+\eta_{0}L_{\mathbb{B}}.
\end{align*}
Satisfaction of the tightened constraints~\eqref{eq:ass_term_3} follows from the definition of $c_j$ in~\eqref{eq:c_j}, using
\begin{equation}
\begin{aligned}
	[F+GK]_jx+c_js 
\stackrel{\eqref{eq:c_j}}{\leq}
	&c_j \max_i H_ix+c_js\\
        \leq &c_{\max}(\max_i H_i x+s)\stackrel{\eqref{eq:term_set}}{\leq} 1.
\end{aligned}
\end{equation}
The uncertainty in the terminal set can be bounded as follows by using Prop.~\ref{prop:cont_w}:
\begin{equation}
\begin{aligned}\label{eq_bound_w_tilde_terminal}
&	\max_{(x,s)\in\mathcal{X}_{f}}w_{\eta}(x,Kx)+\eta L_{\mathbb{B}}s\\
\stackrel{\eqref{eq:Lipschitz}}{\leq}&w_{\eta}(0,0)+\eta L_{\mathbb{B}}\max_{(x,s)\in\mathcal{X}_f}(\max_i H_i (x-0)+s)\\
\stackrel{\eqref{eq:term_set}}{\leq} &\eta L_{\mathbb{B}}/c_{\max}.
\end{aligned}
\end{equation}
The state $x^+$ can be bounded as
\begin{align}
\label{eq:bound_terminal_x_plus}
	\max_{i}H_ix^+&\leq\max_{i}H_iA_{cl,\overline{\theta}}x+\tilde{s}\stackrel{\eqref{eq:rho}}{\leq}\rho_{\bar{\theta}}\max_{i}H_ix+\tilde{s}.
\end{align}
The robust positive invariance condition~\eqref{eq:ass_term_2} follows from
\begin{align*}
s^+ + H_i x^+\leq&\rho_{\overline{\theta}}s+\eta L_{\mathbb{B}}s+w_{\eta}(x,Kx)+\overline{d}-\tilde{s}\\
&+ H_i A_{cl,\overline{\theta}}x+\tilde{s}\\
\stackrel{\eqref{eq_bound_w_tilde_terminal}\eqref{eq:bound_terminal_x_plus}}{\leq}& \rho_{\overline{\theta}}(s+\max_i H_i x)+\eta L_{\mathbb{B}}/c_{\max}+\overline{d}\\
\stackrel{\eqref{eq:term_set}}{\leq}& (\rho_{\overline{\theta}}+\eta L_{\mathbb{B}})/c_{\max}+\overline{d}\stackrel{\eqref{eq:term_cond}}{\leq } 1/c_{\max}.
\end{align*}
\end{proof}
This proposition provides a simple and intuitive characterization of the terminal set $\mathcal{X}_f$, if condition~\eqref{eq:term_cond} is satisfied.   
Generalizations of this design for the stabilization of steady-state $x_s\neq 0$ are discussed in Appendix~\ref{App:terminal}. 
\subsubsection*{Closed loop properties}
The following theorem is the main result of this paper and establishes the closed-loop properties
of the proposed robust adaptive MPC scheme.
\begin{theorem}
\label{thm:main}
Let Assumptions~\ref{ass:model}, \ref{ass:stable}, and \ref{ass:term} hold.
Suppose that Problem~\eqref{eq:MPC} is feasible at $t=0$. 
Then~\eqref{eq:MPC} is recursively feasible and the constraints~\eqref{eq:con} are satisfied for the resulting closed-loop system. 
Furthermore, if the update gain $\mu>0$ satisfies $1/\mu>\sup_{(x,u)\in\mathcal{Z}}\|D(x,u)\|^2$, 
the closed loop is finite gain $\mathcal{L}_2$ stable, i.e., there exist constants $c_0,c_1,c_2>0$, such that the following inequality holds for all $T\in\mathbb{N}$
\begin{align*}
\sum_{k=0}^{T}\|x_k\|^2\leq c_0\|x_0\|^2+c_1\|\hat{\theta}_0-\theta^*\|^2+c_2\sum_{k=0}^{T}\|d_k\|^2.
\end{align*}
\end{theorem}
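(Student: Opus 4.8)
The plan is to prove the two assertions separately, following the tube-based analysis of \cite{Robust_TAC_19} but carefully tracking the online adaptation of $\overline{\theta}_t,\eta_t,\hat{\theta}_t$.

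\textbf{Recursive feasibility and constraint satisfaction.} Because the nominal trajectory is re-initialized at the measured state ($\overline{x}_{0|t}=x_t$, $s_{0|t}=0$), constraint~\eqref{eq:con_tightend} at $k=0$ reduces to $F_jx_t+G_j\overline{u}^*_{0|t}\le 1$ with $u_t=\overline{u}^*_{0|t}$, so closed-loop constraint satisfaction~\eqref{eq:con} is immediate once feasibility is shown. For recursive feasibility I would use the shifted candidate $v_{k|t+1}=v^*_{k+1|t}$ for $k\le N-2$, $v_{N-1|t+1}=0$ (terminal feedback $u=Kx$), with $\overline{x}_{0|t+1}=\hat{x}_{0|t+1}=x_{t+1}$, $s_{0|t+1}=0$, and the tube propagated tightly via~\eqref{eq:s_dyn}--\eqref{eq:w_scalar}. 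The core is a monotone-tube claim, proved by induction in $k$,
\begin{align}
e_k+s_{k|t+1}\le s^*_{k+1|t},\quad e_k:=\max_i H_i(\overline{x}_{k|t+1}-\overline{x}^*_{k+1|t}),\nonumber
\end{align}
whose base case $e_0\le s^*_{1|t}$ is exactly the statement that the realized successor $x_{t+1}=\overline{x}^*_{1|t}+D(x_t,u_t)(\theta^*-\overline{\theta}_t)+d_t$ lies in the predicted tube $\mathbb{X}_{1|t}$, using $\theta^*\in\Theta_t^{HC}$ (Lemma~\ref{lemma:HC}) together with~\eqref{eq:w_scalar}.

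\textbf{Key step and main obstacle.} Propagating this inequality requires controlling the shift of the parameter center $\Delta\theta:=\overline{\theta}_{t+1}-\overline{\theta}_t$, which enters through $e_{k+1}=\max_iH_i(A_{cl,\overline{\theta}_{t+1}}(\overline{x}_{k|t+1}-\overline{x}^*_{k+1|t})+D(\overline{x}^*_{k+1|t},\overline{u}^*_{k+1|t})\Delta\theta)$. The crucial observation is that the projection step of Algorithm~\ref{alg:HC} guarantees $\Delta\theta\in(\eta_t-\eta_{t+1})\mathbb{B}_p$, so the parameter-shift term is bounded by $w_{\eta_t}(\cdot)-w_{\eta_{t+1}}(\cdot)$; combining this with the Lipschitz estimate of Proposition~\ref{prop:cont_w} applied at the candidate point $(\overline{x}_{k|t+1},\overline{u}_{k|t+1})$ and the contraction bound $\rho_{\overline{\theta}_{t+1}}+\eta_{t+1}L_{\mathbb{B}}\le\rho_{\overline{\theta}_t}+\eta_tL_{\mathbb{B}}$ from Proposition~\ref{prop:cont_rho}, the extra $w_{\eta_{t+1}}$ contributions cancel and the induction closes against $s^*_{k+2|t}=\rho_{\overline{\theta}_t}s^*_{k+1|t}+w^*_{k+1|t}$. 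Making this cancellation precise is the main technical hurdle. Once it holds, $e_k+s_{k|t+1}\le s^*_{k+1|t}$ together with~\eqref{eq:c_j} yields the tightened constraints~\eqref{eq:con_tightend} for the candidate, and the terminal inclusion $(\overline{x}_{N|t+1},s_{N|t+1})\in\mathcal{X}_f$ follows from Assumption~\ref{ass:term} (Proposition~\ref{prop:term}), applied at $(\overline{x}_{N-1|t+1},s_{N-1|t+1})\in\mathcal{X}_f$ with $\tilde{s}=0$, where membership of this point in $\mathcal{X}_f$ is again a consequence of the monotone-tube inequality and $(\overline{x}^*_{N|t},s^*_{N|t})\in\mathcal{X}_f$.

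\textbf{Finite-gain $\mathcal{L}_2$ stability.} I would take the optimal cost $V_t$ of~\eqref{eq:MPC} as a storage function and evaluate it along the same shifted candidate, now tracking the point-estimate trajectory $\hat{x}_{k|t+1}$ (initialized at $x_{t+1}$, propagated with $\hat{\theta}_{t+1}$). Since $\hat{x}_{0|t+1}=x_{t+1}=\hat{x}^*_{1|t}+\tilde{x}_{1|t}$ and $\hat{\theta}_{t+1}\neq\hat{\theta}_t$, the candidate cost differs from the shifted optimal by the dropped stage cost $\ell(x_t,u_t)$, a non-positive terminal term handled by~\eqref{eq:P} (valid since $\hat{\theta}_{t+1}\in\Theta_0^{HC}$ by Lemma~\ref{lemma:HC}), and mismatch terms driven by $\tilde{x}_{1|t}$ and $\hat{\theta}_{t+1}-\hat{\theta}_t$. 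Bounding these cross terms (via Young's inequality and $\|\hat{\theta}_{t+1}-\hat{\theta}_t\|\lesssim\mu\|D(x_t,u_t)\|\,\|\tilde{x}_{1|t}\|$, with the contraction of $A_{cl,\hat{\theta}_{t+1}}$) gives a dissipation inequality $V_{t+1}\le V_t-\|x_t\|_Q^2-\|u_t\|_R^2+\gamma\|\tilde{x}_{1|t}\|^2$. Summing over $t=0,\dots,T$, using the quadratic upper bound $V_0\le c_0\|x_0\|^2$, Lemma~\ref{lemma:LMS} to bound $\sum_t\|\tilde{x}_{1|t}\|^2\le\tfrac{1}{\mu}\|\hat{\theta}_0-\theta^*\|^2+\sum_t\|d_t\|^2$, and $\|x_t\|_Q^2\ge\lambda_{\min}(Q)\|x_t\|^2$, then yields the claimed $\mathcal{L}_2$ bound with suitable $c_0,c_1,c_2$. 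The delicate point here is absorbing the linear mismatch terms into $\|\tilde{x}_{1|t}\|^2$ without producing state-dependent growth, which is precisely where the non-expansiveness of the projection (underlying Lemma~\ref{lemma:LMS}) and the quadratic upper bound on the value function are essential.
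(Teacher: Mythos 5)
Your recursive-feasibility argument is, in substance, the paper's own: your single induction $e_k+s_{k|t+1}\le s^*_{k+1|t}$ merges the paper's two inductions (the error bound $\max_i H_i(\overline{x}_{k|t+1}-\overline{x}^*_{k+1|t})\le\tilde{s}_{k|t+1}$ with an auxiliary tube $\tilde{s}$, and the nestedness $s_{k|t+1}+\tilde{s}_{k|t+1}\le s^*_{k+1|t}$), and the cancellation you identify — the projection in Algorithm~\ref{alg:HC} giving $\overline{\theta}_{t+1}-\overline{\theta}_t\in(\eta_t-\eta_{t+1})\mathbb{B}_p$, linearity of $w_{\eta}$ in $\eta$, Proposition~\ref{prop:cont_w} at the candidate point, and $\rho_{\overline{\theta}_{t+1}}+\eta_{t+1}L_{\mathbb{B}}\le\rho_{\overline{\theta}_t}+\eta_t L_{\mathbb{B}}$ from Proposition~\ref{prop:cont_rho} — does close that induction exactly as in the paper. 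The constraint-satisfaction step and the $\mathcal{L}_2$ part also follow the paper's route (the paper itself defers the stability details to the cited references, so your sketch is at the same level of rigor there).

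There is, however, a genuine gap in your terminal step. You apply Assumption~\ref{ass:term} at $(\overline{x}_{N-1|t+1},s_{N-1|t+1})$ with $\tilde{s}=0$, justifying membership of this point in $\mathcal{X}_f$ by the monotone-tube inequality and $(\overline{x}^*_{N|t},s^*_{N|t})\in\mathcal{X}_f$. That inference needs a shift-monotonicity property of $\mathcal{X}_f$ — namely that $(x^*,s^*)\in\mathcal{X}_f$ and $\max_i H_i(x-x^*)+s\le s^*$ imply $(x,s)\in\mathcal{X}_f$ — which holds for the explicit polytopic set of Proposition~\ref{prop:term} but is \emph{not} implied by Assumption~\ref{ass:term}: the assumption only constrains one-step successors of points in $\mathcal{X}_f$ (points $x^+$ near $A_{cl,\overline{\theta}}x$), not nearby points with compensatingly smaller tube size, and the theorem is stated under the general assumption. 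The slack $\tilde{s}$ in Assumption~\ref{ass:term} exists precisely so that this detour is unnecessary: the paper applies the assumption once, at the \emph{old} terminal point $(\overline{x}^*_{N|t},s^*_{N|t})\in\mathcal{X}_f$ with $\Theta^{HC}=\Theta_t^{HC}$, taking $x^+=\overline{x}_{N|t+1}$, $s^+=s_{N|t+1}$, and $\tilde{s}$ equal to the trajectory mismatch at the terminal stage. Your own merged inequality at $k=N$ repairs this directly: extend the old solution by $v^*_{N|t}=0$, so $\overline{x}^*_{N+1|t}=A_{cl,\overline{\theta}_t}\overline{x}^*_{N|t}$ and $s^*_{N+1|t}=(\rho_{\overline{\theta}_t}+\eta_t L_{\mathbb{B}})s^*_{N|t}+w_{\eta_t}(\overline{x}^*_{N|t},K\overline{x}^*_{N|t})+\overline{d}$; then set $\tilde{s}:=\max_i H_i(\overline{x}_{N|t+1}-\overline{x}^*_{N+1|t})$, and $e_N+s_{N|t+1}\le s^*_{N+1|t}$ gives exactly $s_{N|t+1}\le(\rho_{\overline{\theta}_t}+\eta_t L_{\mathbb{B}})s^*_{N|t}+w_{\eta_t}(\overline{x}^*_{N|t},K\overline{x}^*_{N|t})+\overline{d}-\tilde{s}$, so Assumption~\ref{ass:term} yields $(\overline{x}_{N|t+1},s_{N|t+1})\in\mathcal{X}_f$ (and condition~\eqref{eq:ass_term_3} at the old terminal point yields the tightened constraint at $k=N-1$). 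As written, your proof covers only the special terminal set of Proposition~\ref{prop:term}, not the general statement.
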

\begin{proof}
The proof of robust recursive feasibility is an extension of~\cite[Thm. 1]{Robust_TAC_19} to online adapting models. 
 The stability result follows using the same arguments as in~\cite[Thm.~14]{ lorenzen2019robust}. 
The main novel step to show recursive feasibility is to prove that the tube around the candidate solution is contained inside the tube of the optimal solution at time $t$, compare Figure~\ref{fig:illustrate} for an illustration.  \\
As done in \cite[Thm. 1]{Robust_TAC_19} we first construct the candidate solution (Part I). 
Then we bound the difference between this candidate and the optimal solution at time $t$ (Part II) in order to ensure that the new tube around the candidate trajectory is contained in the previous optimal tube (Part III).
Then we show that the candidate solution satisfies the tightened state and input constraints (Part IV) and the terminal set constraint (Part V).
Finally, we establish the stability properties (Part VI). \\
\textbf{Part I. } Candidate solution:
For convenience, define
\begin{align}
\label{eq:solution_old}
	v^*_{N|t}&=0,~\overline{u}^*_{N|t}=K\overline{x}^*_{N|t},\\
	w^*_{N|t}&=\overline{d}+
\eta_tL_{\mathbb{B}}s^*_{N|t}+w_{\eta_t}(\overline{x}^*_{N|t},\overline{u}^*_{N|t}),\nonumber\\
\overline{x}^*_{N+1|t}&=A_{cl,\overline{\theta}_t}\overline{x}^*_{N|t}+B_{\overline{\theta}_t}v^*_{N|t},~s^*_{N+1|t}=\rho_{\overline{\theta}_t}s^*_{N|t}+w^*_{N|t}. \nonumber
\end{align}
We consider the candidate solution
\begin{align}
\label{eq:candidate}
	\overline{x}_{0|t+1}&=\hat{x}_{0|t+1}=x_{t+1},\quad s_{0|t+1}=0, \\
	v_{k|t+1}&=v^*_{k+1|t},\nonumber\\
        \overline{u}_{k|t+1}&=v_{k|t+1}+K\overline{x}_{k|t+1},~
        \hat{u}_{k|t+1}=v_{k|t+1}+K\hat{x}_{k|t+1},\nonumber\\
	\overline{x}_{k+1|t+1}&=A_{cl,\overline{\theta}_{t+1}}\overline{x}_{k|t+1}+B_{\overline{\theta}_{t+1}}v_{k|t+1},\nonumber	\\
	\hat{x}_{k+1|t+1}&=A_{cl,\overline{\theta}_{t+1}}\hat{x}_{k|t+1}+B_{\overline{\theta}_{t+1}}v_{k|t+1},\nonumber	\\
	s_{k+1|t+1}&=\rho_{\overline{\theta}_{t+1}}s_{k|t+1}+w_{k|t+1},\nonumber\\
	w_{k|t+1}&=\overline{d}+\eta_{t+1}L_{\mathbb{B}}s_{k|t+1}+w_{\eta_{t+1}}(\overline{x}_{k|t+1},\overline{u}_{k|t+1}),\nonumber
\end{align}
with $k=0,...,N-1$.\\
\textbf{Part II. } Bound candidate solution: 
Due to the parameter update, the candidate state trajectory $\overline{x}_{\cdot|t+1}$ is computed with a different model than the previous optimal solution $\overline{x}^*_{\cdot|t}$. 
The dynamics of the optimal trajectory $\overline{x}^*$ can be equivalently written as
\begin{align}
\label{eq:dynamic_parameter_update}
	&\overline{x}^*_{k+2|t}=A_{cl,\overline{\theta}_t}\overline{x}^*_{k+1|t}+B_{\overline{\theta}_t}v^*_{k+1|t}\\
     =&A_{cl,\overline{\theta}_{t+1}}\overline{x}^*_{k+1|t}+B_{\overline{\theta}_{t+1}}v^*_{k+1|t}
       -D(\overline{x}^*_{k+1|t},\overline{u}^*_{k+1|t})\Delta\overline{\theta}_t,\nonumber
\end{align}
with the change in parameters $\Delta\overline{\theta}_t=\overline{\theta}_{t+1}-\overline{\theta}_t$. 
Note that the definition of $\overline{\theta}_{t+1}$ in Algorithm~\ref{alg:HC} ensures  $\Delta\overline{\theta}_t\in\Delta\Theta_t:=\Delta \eta_{t}\mathbb{B}_p$, with $\Delta\eta_t=\eta_{t}-\eta_{t+1}\geq 0$. 
Define the auxiliary tube size
\begin{align}
\label{eq:s_tilde}
\tilde{s}_{0|t+1}=&w^*_{0|t}=s^*_{1|t},\nonumber\\
\tilde{s}_{k+1|t+1}=&\rho_{\overline{\theta}_{t+1}}\tilde{s}_{k|t+1}+w_{\Delta \eta}(\overline{x}^*_{k+1|t},\overline{u}^*_{k+1|t}).
\end{align}
In the following, we show that that the candidate solution $\overline{x}_{\cdot|t+1}$, the previous optimal solution $\overline{x}^*_{\cdot|t+1}$ and the auxiliary tube size $\tilde{s}_{\cdot|t+1}$ satisfy 
\begin{align}
\label{eq:tilde_s_bound}
\overline{x}_{k|t+1}-\overline{x}^*_{k+1|t}=:e_{k|t+1}\in \tilde{s}_{k|t+1}\cdot \mathcal{P},
\end{align}
 where  $e_{\cdot|t+1}$ denotes the error between the two trajectories. 
Using~\eqref{eq:candidate} and \eqref{eq:dynamic_parameter_update}, the error dynamics are given by
\begin{align*}
e_{k+1|t+1}=&A_{cl,\overline{\theta}_{t+1}}e_{k|t+1}
+D(\overline{x}^*_{k+1|t},\overline{u}^*_{k+1|t})\Delta\overline{\theta}_{t},
\end{align*}
with the initial condition $e_{0|t+1}=x_{t+1}-x^*_{1|t}$. 
We prove~\eqref{eq:tilde_s_bound} using induction. 
Induction start $k=0$:
Using $\theta^*\in\overline{\theta}_t\oplus\eta_t\mathbb{B}_p=\Theta_t^{HC}$, condition~\eqref{eq:tilde_s_bound} is satisfied at $k=0$ with
\begin{align*}
 &H_i(\overline{x}_{0|t+1}-\overline{x}^*_{1|t})=H_i(x_{t+1}-\overline{x}^*_{1|t})\\
=&H_i(d_t+D(\overline{x}^*_{0|t},\overline{u}^*_{0|t})(\theta^*-\overline{\theta}_t))\\
\leq&\max_i H_id_t+\eta_t\max_{i,l}  H_i D(\overline{x}^*_{0|t},\overline{u}^*_{0|t})\tilde{e}_l\\
\stackrel{\eqref{eq:d},\eqref{eq:w_eta}}{\leq}& \overline{d}+w_{\eta_t}(\overline{x}^*_{0|t},\overline{u}^*_{0|t}) 
\stackrel{\eqref{eq:w_scalar}}{\leq} w^*_{0|t}\stackrel{\eqref{eq:s_tilde}}{=}\tilde{s}_{0|t+1}.
\end{align*}
Induction step $k+1$:
 Assuming~\eqref{eq:tilde_s_bound} holds at some $k\in\{0,\dots,N-1\}$, condition~\eqref{eq:tilde_s_bound} also holds at $k+1$ using
\begin{align*}
&\max_{i}H_ie_{k+1|t+1}\\
\leq& 
\max_{i}H_iA_{cl,\overline{\theta}_{t+1}}e_{k|t+1}+\max_{i}H_iD(\overline{x}^*_{k+1|t},\overline{u}^*_{k+1|t})\Delta \overline{\theta}_t\\
\stackrel{\eqref{eq:rho}}{\leq}&\rho_{\overline{\theta}_{t+1}}\max_i H_i e_{k|t+1}+w_{\Delta \eta}(\overline{x}^*_{k+1|t},\overline{u}^*_{k+1|t})\\
\stackrel{\eqref{eq:tilde_s_bound}}{\leq} & \rho_{\overline{\theta}_{t+1}}\tilde{s}_{k|t+1}+w_{\Delta \eta}(\overline{x}^*_{k+1|t},\overline{u}^*_{k+1|t})\\
\stackrel{\eqref{eq:s_tilde}}{=}&\tilde{s}_{k+1|t+1},
\end{align*}
where the second inequality uses $\Delta\overline{\theta}_t\subseteq\Delta \eta_t\mathbb{B}_p$. 
\subsubsection*{Interpretation}
The trajectory $\tilde{s}$ is composed of two parts: a first part depending on the initial prediction mismatch $x_{t+1}-\overline{x}^*_{1|t}$ and a second part due to the parameter update. 
In the special case that there is no model adaptation ($\Delta\eta_t=0,~\Delta\overline{\theta}=0$), the auxiliary tube size $\tilde{s}$ reduces to $\tilde{s}_{k|t+1}=\rho^kw^*_{0|t}$, which shows that the robust MPC proof in~\cite{Robust_TAC_19} is contained as a special case. \\
\textbf{Part III. } Prove that the tube around the candidate solution at time $t+1$ is contained inside the tube around the optimal solution at time $t$. 
We show that the following inequality holds for $k=0,\dots,N$, using induction:
\begin{align}
 \label{eq:s_bound_induction}
	&s_{k|t+1}- s^*_{k+1|t}+\tilde{s}_{k|t+1}\leq 0.
\end{align}
Inequality~\eqref{eq:s_bound_induction} ensures that the candidate tube $\mathbb{X}_{k|t+1}$ is contained in the previous optimal tube $\mathbb{X}^*_{k+1|t}$. 
This nestedness property of the tubes is illustrated in Figure~\ref{fig:illustrate}.
The key to showing this property is that the reduction in the uncertainty $w_{k|t+1}-w^*_{k+1|t}$ is equivalent to the uncertainty that is used to compute the tube $\tilde{s}$~\eqref{eq:s_tilde} that bounds the deviation between the previous optimal solution and the candidate solution. 
First, note that the following bound holds for all $k=0,\dots,N-1$ using $\Delta\eta_t\geq 0$ and the bound~\eqref{eq:tilde_s_bound}:
\begin{align}
\label{eq:w_bound_induction_2}
&w_{k|t+1}-w^*_{k+1|t}+\Delta \eta_t L_{\mathbb{B}} s^*_{k+1|t}\\
\stackrel{\eqref{eq:w_eta}\eqref{eq:w_scalar}\eqref{eq:candidate}}{\leq}& L_{\mathbb{B}}(\eta_{t+1}s_{k|t+1}-\eta_ts^*_{k+1|t})+\Delta \eta_t L_{\mathbb{B}} s^*_{k+1|t}\nonumber\\
&+w_{\eta_{t+1}}(\overline{x}_{k|t+1},\overline{u}_{k|t+1})-w_{\eta_t}(\overline{x}^*_{k+1|t},\overline{u}^*_{k+1|t})\nonumber\\
\stackrel{\eqref{eq:Lipschitz}}{\leq}& \eta_{t+1}L_{\mathbb{B}}(s_{k|t+1}-s^*_{k+1|t})\nonumber\\
&+\eta_{t+1}L_{\mathbb{B}}\max_i H_i(\overline{x}_{k|t+1}-\overline{x}^*_{k+1|t})\nonumber\\
&+w_{\eta_{t+1}}(\overline{x}^*_{k+1|t},\overline{u}^*_{k+1|t})-w_{\eta_t}(\overline{x}^*_{k+1|t},\overline{u}^*_{k+1|t}).\nonumber\\
\stackrel{\eqref{eq:tilde_s_bound}}{\leq}& \eta_{t+1}L_{\mathbb{B}}(s_{k|t+1}-s^*_{k+1|t}+\tilde{s}_{k|t+1})\nonumber\\
&-w_{\Delta\eta_t}(\overline{x}^*_{k+1|t},\overline{u}^*_{k+1|t}).\nonumber
\end{align}
 Induction start $k=0$: Condition~\eqref{eq:s_bound_induction} holds with 
\begin{align*}
&s_{0|t+1}-s^*_{1|t}+\tilde{s}_{0|t+1}=w^*_{0|t}-w^*_{0|t}=0.
\end{align*}
Suppose that condition~\eqref{eq:s_bound_induction} holds for some $k\in\{0,\dots,N-1\}$, then condition~\eqref{eq:s_bound_induction} also holds at $k+1$ with
\begin{align*}
&s_{k+1|t+1}+\tilde{s}_{k+1|t+1}-s^*_{k+2|t}\\
\stackrel{\eqref{eq:s_dyn}\eqref{eq:candidate}\eqref{eq:s_tilde}}{=}&\rho_{\overline{\theta}_{t+1}}s_{k|t+1}+w_{k|t+1}
+\rho_{\overline{\theta}_{t+1}}\tilde{s}_{k|t+1}\\
&+w_{\Delta\eta_t}(\overline{x}^*_{k+1|t},\overline{u}^*_{k+1|t})-\rho_{\overline{\theta}_t}s^*_{k+1|t}-w^*_{k+1|t}\\
\stackrel{\eqref{eq:w_bound_induction_2}}{\leq}&\rho_{\overline{\theta}_{t+1}}(s_{k|t+1}+\tilde{s}_{k|t+1})				
-(\rho_{\overline{\theta}_t}+\Delta\eta_tL_{\mathbb{B}})s^*_{k+1|t}\\
&\eta_{t+1}L_{\mathbb{B}}(s_{k|t+1}-s^*_{k+1|t}+\tilde{s}_{k|t+1})\\
\stackrel{\eqref{eq:cont_rho}}{\leq} &(\rho_{\overline{\theta}_t}+\Delta \eta_tL_{\mathbb{B}}+\eta_{t+1}L_{\mathbb{B}})\\
&\cdot(s_{k|t+1}+\tilde{s}_{k|t+1}-s^*_{k+1|t})
\stackrel{\eqref{eq:s_bound_induction}}{\leq} 0.
\end{align*}
\textbf{Part IV.} State and input constraint satisfaction \eqref{eq:con_tightend}.
For $k=0,...,N-2$ we have
\begin{equation*}
\begin{aligned}
	&F_j\overline{x}_{k|t+1}+G_j\overline{u}_{k|t+1}+c_js_{k|t+1}\\
	\stackrel{(\ref{eq:c_j})}{\leq}
	&F_j\overline{x}^*_{k+1|t}+G_j\overline{u}^*_{k+1|t}\\
        &+c_j\max_{i}H_i(\overline{x}_{k|t+1}-\overline{x}^*_{k+1|t})+c_js_{k|t+1}			\\
	\stackrel{\eqref{eq:tilde_s_bound}}{\leq}
	& F_j\overline{x}^*_{k+1|t}+G_j\overline{u}^*_{k+1|t}+c_j(\tilde{s}_{k|t+1}+s_{k|t+1})	\\
	\stackrel{(\ref{eq:s_bound_induction})}{\leq}
	& F_j\overline{x}^*_{k+1|t}+G_j\overline{u}^*_{k+1|t}+c_js^*_{k+1|t} 
	\stackrel{\eqref{eq:con_tightend}}{\leq} 1.
\end{aligned}
\end{equation*}
For $k=N-1$ the terminal ingredients (Ass.~\ref{ass:term}) ensure
\begin{equation*}
\begin{aligned}
	&F_j\overline{x}_{N-1|t+1}+G_j\overline{u}_{N-1|t+1}+c_js_{N-1|t+1}						\\
	\stackrel{\eqref{eq:c_j}\eqref{eq:tilde_s_bound}\eqref{eq:s_bound_induction}}{\leq}&
	F_j\overline{x}^*_{N|t}+G_j\overline{u}^*_{N|t}+c_js^*_{N|t}							\\
	\stackrel{\eqref{eq:solution_old}}{=}&
	[F+GK]_j\overline{x}^*_{N|t}+c_js^*_{N|t}
	 \stackrel{\eqref{eq:con_term}\eqref{eq:ass_term_3}}{\leq} 1.
\end{aligned}
\end{equation*}
Satisfaction of~\eqref{eq:con_tightend} at $k=0$ ensures that the closed loop satisfies the constraints~\eqref{eq:con}, i.e.,  $(x_t,u_t)\in\mathcal{Z}$ for all $t\geq 0$. \\
\textbf{Part V. }Terminal constraint satisfaction.
We have
\begin{align*}
	\max_{i}H_i(\overline{x}_{N|t+1}-\overline{x}^*_{N+1|t})\stackrel{\eqref{eq:tilde_s_bound}}{\leq}\tilde{s}_{N|t+1}
\end{align*}	
and
\begin{align*}
	s_{N|t+1}\stackrel{\eqref{eq:s_bound_induction}}{\leq}&s^*_{N+1|t}-\tilde{s}_{N|t+1}\\
\stackrel{\eqref{eq:solution_old}}{=}&(\rho_{\overline{\theta}_t}+\eta_tL_{\mathbb{B}})s^*_{N|t}     +\overline{d}+w_{\eta_t}(\overline{x}^*_{N|t},\overline{u}^*_{N|t})-\tilde{s}_{N|t+1}.
\end{align*}
Thus, condition \eqref{eq:ass_term_2} from Assumption~\ref{ass:term} ensures $(\overline{x}_{N|t+1},s_{N|t+1})\in\mathcal{X}_{f}$. \\
\textbf{Part VI. } Finite-gain $\mathcal{L}_2$ stability: 
The finite-gain $\mathcal{L}_2$ stability of the closed loop follows from the properties of the LMS point estimate (Lemma~\ref{lemma:LMS}), constraint satisfaction $(x_t,u_t)\in\mathcal{Z}$ and quadratic bounds on the value functions~\cite[Thm.~14]{ lorenzen2019robust}, compare also~\cite[Prop.~5.10]{Elisa}.  
\end{proof}
\begin{figure}[tbp]
\begin{center}
\includegraphics[width=0.425\textwidth]{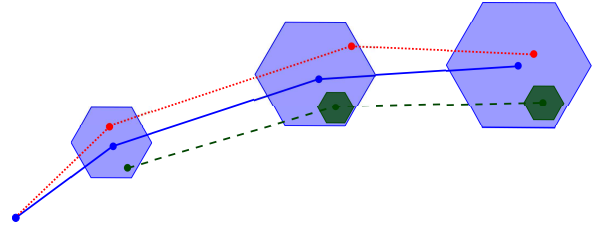}
\end{center}
\caption{Illustration - nested tubes property: Optimal trajectory $x^*_{\cdot|t}$ (blue, solid), candidate trajectory $x_{\cdot|t+1}$ (green, dashed), LMS trajectory $\hat{x}^*_{\cdot|t}$ (red, dotted), with correspond tubes  $\mathbb{X}^*_{k|t}=\{z|H_i (z-\overline{x}^*_{k|t})\leq s^*_{k|t}\}$ (blue polytopes), $\mathbb{X}_{k|t+1}=\{\tilde{z}|~H_i(\tilde{z}-\overline{x}_{k|t+1})\leq s_{k|t+1})\}$ (green polytopes).}
\label{fig:illustrate}
\end{figure}

%
\subsection{Discussion}
\label{sec:discuss}
\begin{remark}
\label{rk:time-varying} (Time-varying parameters)
The proposed robust adaptive MPC scheme can be directly extended to account for slowly time-varying parameters $\theta_{t+1}\in\Theta\cap (\theta_t\oplus\Omega)$, where the hypercube $\Omega=\omega\mathbb{B}_p$ bounds the maximal change in the time-varying parameters $\theta_t$.
In this case, Algorithm~\ref{alg:HC} uses the non-falsified set $\Delta_{k|t}=\Delta_k\oplus (t-k)\Omega$ instead of $\Delta_k$. 
Similarly, for the predictions a growing parameter set $\Theta_{k|t}^{HC}=\Theta_t^{HC}\oplus k\Omega$ is considered with $\eta_{k|t}=\eta_t+k\omega$. 
The finite-gain stability w.r.t the additive disturbances $d_t$ from Theorem~\ref{thm:main}  changes to a finite-gain stability w.r.t both the disturbances $d_t$ and a signal measuring the change in the parameters $\theta_t$. 
The other theoretical properties in Theorem~\ref{thm:main} remain unchanged.  
The corresponding details and proofs can be found in~{\cite[Sec.~5.2]{Elisa}}. 
\end{remark}

\begin{remark}
\label{rk:robust_methods}
(Complexity and conservatism of robust MPC methods)
In the following, we discuss different methods to propagate the uncertainty in robust MPC in terms of their computational complexity and conservatism. 
In the considered robust MPC approach, the uncertainty is propagated using 
\begin{align}
\label{eq:w_2}
s^+\geq (\rho_{\theta}+\eta L_{\mathbb{B}})s+\overline{d}+\eta H_i D(\overline{x},\overline{u})\tilde{e}_l,
\end{align}
with $r\cdot 2^p$ linear inequality constraints and the additional optimization variable $s$.
This formulation takes into account the parametric uncertainty and the shape of the polytope $H_i$. 
The contractive dynamics and the disturbances are overapproximated with the scalars $\overline{d}$, $\rho_{\theta}$.

In case only $p_B<p$ uncertain parameters $\theta$ affect the matrix $B_{\theta}$, the following formulation provides a computationally cheaper overapproximation using inequality~\eqref{eq:Lipschitz} with $(z,v)=(0,u-Kx)$: 
\begin{align}
\label{eq:w_3}
s^+\geq(\rho_{\theta}+\eta L_{\mathbb{B}})s+\eta L_{\mathbb{B}}H_k\overline{x}+\eta H_iD(0,v)\tilde{e}_l+\overline{d}.
\end{align}
By introducing an additional slack variable to evaluate $\max_k H_k x$, this condition can be posed as $r\cdot (2^{p_B}+1)$ inequality constraints, compare~Appendix~\ref{App:tube}. 

In~ \cite[Prop.~1]{Robust_TAC_19}, the following formulation was considered
\begin{align}
\label{eq:w_1}
s^+\geq (\rho_{\theta}+\eta L_{\mathbb{B}})s+\eta H_iD(\overline{x},\overline{u})\tilde{e}_l+\overline{d}_i.
\end{align}
with $\overline{d}_i=\max_{d\in\mathbb{D}}H_i d$. 
This constraint is less conservative than \eqref{eq:w_2} since $\overline{d}\geq \overline{d}_i$, but the presented proof for robust \textit{adaptive} MPC cannot be applied to this formulation since $\max_i \eta H_iD(\overline{x},\overline{u})\tilde{e}_l+\overline{d}_i$ is piece-wise affine in $\eta$, while $w_{\eta}+\overline{d}$ in~\eqref{eq:w_2} is affine in $\eta$.

In~\cite{Lu2019RAMPC} a robust adaptive MPC scheme is presented with the more flexible parameterization $\mathbb{X}=\{z|~H_iz\leq \alpha_i,~i=1,\dots,r\}$, using an online optimized vector $\alpha\in\mathbb{R}^r$, compare also~\cite{fleming2015robust}.  
The corresponding tube propagation is given by 
\begin{align}
\label{eq:Lu_19}
\alpha_i^+\geq \overline{d}_i+[1,(\overline{\theta}+\eta\tilde{e}_l)^\top]\hat{H}_i\alpha+H_iB(\overline{\theta}+\eta\tilde{e}_l)v,
\end{align}
where $\hat{H}_i$ are computed offline using LPs, compare~\cite[Lemma~8]{Lu2019RAMPC} for details.
This conditions uses $r\cdot 2^p$ linear inequality constraints and requires $r$ additional optimization variables $\alpha$.
Compared to~\eqref{eq:w_2}, the matrices $\hat{H}_i$ capture the propagation more accurately compared to the scalars $\rho,~L_{\mathbb{B}}$. 
Furthermore, the more flexible parameterization reduces the conservatism at the expense of more decision variables.

In~\cite{lorenzen2019robust}, a homothetic tube~\cite{rakovic2012homothetic}, \cite{rakovic2013homothetic} is used, where online optimized matrices $\Lambda^j_{k|t}$  characterize the tube propagation as follows
\begin{align}
\label{eq:homothetic_tube}
&H_i(D(\overline{x},\overline{u})+s D(z^j,Kz^j))=\Lambda_i^j H_{\theta},~\Lambda_i^j\geq 0,\\
&\Lambda_i^j h_{\theta} + H_i (A_{cl,0}\overline{x}+B_0v-x^++s A_{cl,0}z^j)+\overline{d}_i\leq s^+,\nonumber
\end{align}
with $\Theta=\{H_{\theta}\theta\leq h_{\theta}\}$ and the vertex representation $\mathcal{P}=Conv(z^j)$, $j=1,\dots,r_{v}$. 
Assuming $\Theta$ is a hypercube and thus $\Lambda^j\in\mathbb{R}^{r\times 2p}$, this condition requires  $r_{v}\cdot r\cdot 2p$ additional optimization variables and $3r_{v}\cdot r$ additional inequality constraints. 
Since the complexity increases with $r_{v}\cdot r$, the method is likely limited to low dimensional problems with simple polytopes $\mathcal{P}$. 

This trade-off between computational complexity and conservatism is also investigated in the numerical example in Section~\ref{sec:num}. 

{We point out that the results in Theorem~\ref{thm:main} only apply to formulation~\eqref{eq:w_2}. 
Extending the results in Theorem~\ref{thm:main} to a more general class of robust tube formulations (similar to~\cite{Robust_TAC_19}) is part of current research.}
\end{remark}

\begin{remark}
\label{rk:fixed_shape}
In this paper, we consider a hypercube $\Theta_t^{HC}$ for the uncertainty propagation to keep the computational complexity low. 
Instead, any polytope $\Theta_t$  of fixed shape can equally be used with $\Theta_t=\eta_t \Theta_0$. 
However, the theoretical properties in Theorem~\ref{thm:main} are not valid if the shape of the set $\Theta_t$ changes. 
In particular, this means that we cannot use a general hyperbox $\Theta_t^{HB}$, with a flexible ratio between the different side lengths, which is the case in~\cite{lorenzen2019robust,Lu2019RAMPC}. 
Since using a hyperbox instead of a hypercube could reduce the conservatism without increasing the computational complexity, extending the theory to allow for such sets is an interesting open problem. 
Similarly, in~\cite{adetola2011robust} for nonlinear systems a  fixed shape set $\Theta_t$ in form of a ball is used for the robust propagation, even though a less conservative ellipsoidal set  $\Theta=\{\|\tilde{\theta}\|_\Sigma^2\leq \eta\}$ based on the RLS estimate $\hat{\theta}$ is available, compare also~\cite{zhang2019computationally}. 
\end{remark}

\section{Numerical example}
\label{sec:num}
The following example demonstrates the performance improvements of the proposed adaptive method compared to a robust MPC formulation.
Furthermore, we investigate the computational demand and the conservatism of the different robust MPC formulations. 
\subsubsection*{Model}
We consider a simple mass spring damper system 
\begin{align*}
m\ddot{x}_1=-c\dot{x}_1-kx+u+d,
\end{align*}
with mass $m=1$, uncertain damping constant $c\in[0.1,0.3]$, uncertain spring constant $k\in[0.5,1.5]$ and additive disturbances $|d_t|\leq 0.2$. 
The true \textit{unkown} parameters are $c^*=0.3$, $k^*=0.5$. 
The state is defined as $x=(x_1;\dot{x}_1)\in\mathbb{R}^2$.
We consider the constraint set $\mathcal{Z}=[-0.1,1.1]\times [-5,5]\times[-5,5]$ and use a Euler discretization with a sampling time of $T_s=0.1~s$. 
The control goal is to alternately track the origin and the setpoint $x_s=(1;0)$.

\subsubsection*{Offline Computation} 
The matrices $P,K$ are computed using the LMIs~\eqref{eq:LMIs} in Appendix~\ref{App:tube}, such that $P$ satisfies~\eqref{eq:P} with  $Q=\text{diag}(1,10^{-2})$, $R=10^{-1}$ and is contractive with $A_{cl,\theta}^\top P A_{cl,\theta}\leq \rho^2P$ and $\rho=0.75$.
The polytope $\mathcal{P}$ is computed as the maximal $\rho$-contractive set~\cite{pluymers2005efficient,blanchini2008set} for the constraint set
\begin{align*}
\tilde{\mathcal{Z}}=\{|x_1|\leq 0.1, |x_2|\leq 5,u\in[-5,4] \},
\end{align*}
 which is described by $r=18$ linear inequalities.  
Note, that the set $\tilde{\mathcal{Z}}$ is chosen such that $(x_s,u_{s,\overline{\theta}_0})\oplus\tilde{\mathcal{Z}}\subseteq\mathcal{Z}$ for both setpoints $x_s\in\{(0,0),(1,0)\}$.
Since $x_s\neq0$, the terminal set cannot be directly computed using Proposition~\ref{prop:term}, due to the additional uncertainty at the steady state and the parameter dependence of the steady state input $u_{s,\theta}=kx_s\in[0.5,1.5]$.
However, in the considered example, we can choose the terminal set 
\begin{align*}
\mathcal{X}_f=\{(x,s)\in\mathbb{R}^{n+1}|~s+H_i(x-x_s)\leq 1\}.
\end{align*}
The RPI condition~\eqref{eq:term_cond} in Proposition~\ref{prop:term} changes to
\begin{align*}
1-\rho_{\overline{\theta}_0}+\eta_0L_{\mathbb{B}}+w_{\eta_0}(x_s,u_s) + c_{\max}\overline{d}\leq 1,
\end{align*}
which is satisfied with $c_{\max}=1$, $\overline{d}=0.0582$, $\rho_{\overline{\theta}_0}=0.75$, $\eta_0 L_{\mathbb{B}}=0.0363$, $w_{\eta_0}(x_s,u_s)=0.1455$. 
Additional details regarding the terminal set for steady states $x_s\neq0$ can be found in Proposition~\ref{prop:term_app_2} in the appendix.

\subsubsection*{Closed-loop performance improvement}
We implement the proposed approach (Adaptive RMPC) with a prediction horizon of $N=14$ and a window length of $M=10$. 
For comparison, we also implement a purely robust formulation (RMPC) without any model adaptation, which corresponds to the robust MPC scheme in~\cite{Robust_TAC_19}. 
The corresponding closed-loop performance can be seen in Figure~~\ref{fig:close}. 
The parameter update significantly improves the tracking error.
The complexity of the MPC optimization problem~\eqref{eq:MPC} is not affected by the online parameter adaptation. 
The only increase in computational complexity, relative to robust MPC, is the computation of the hypercube overapproximation (Alg.~\ref{alg:HC}), which approximately increases the computation time by $2\%$. 
Thus,  combining the robust MPC approach in~\cite{Robust_TAC_19} with online parameter adaptation significantly improves the closed-loop performance with a marginally increase in computational complexity. 
\begin{figure}[!htbp]
      \centering
      \includegraphics[width=0.48\textwidth]{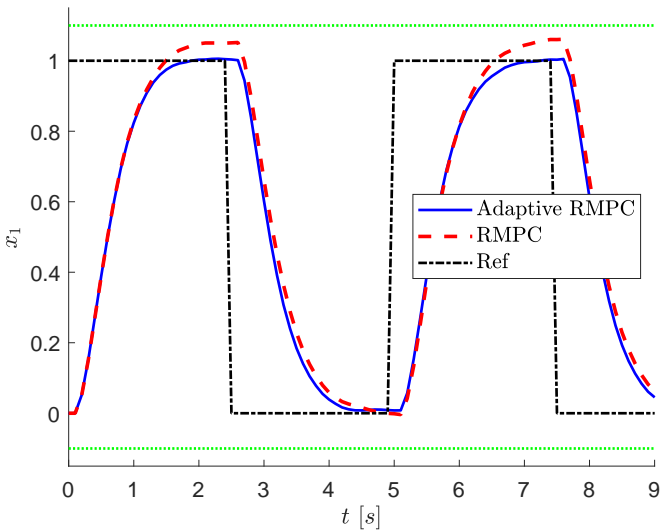}
      \caption{Comparison of closed loop trajectories with setpoint changing between 0 and 1 for the proposed adaptive RMPC and the RMPC~\cite{Robust_TAC_19}.}
      \label{fig:close}
\end{figure}

\subsubsection*{Parameter estimation}
The parameter estimation can be seen in Figure~\ref{fig:param}. 
The hypercube parameter set $\Theta_t^{HC}$ is shrinking during online operation and the LMS point estimate $\hat{\theta}_{t}$  converges to a small neighborhood of the true parameters $\theta^*=(1,-1)$.  
The projection of the LMS point estimate on the parameter set $\Theta_t^{HC}$ comes at virtually no cost, but can significantly improve the transient error, especially in case of large disturbances $d_t$. 
\begin{figure}[!htbp]
      \centering
      \includegraphics[width=0.5\textwidth]{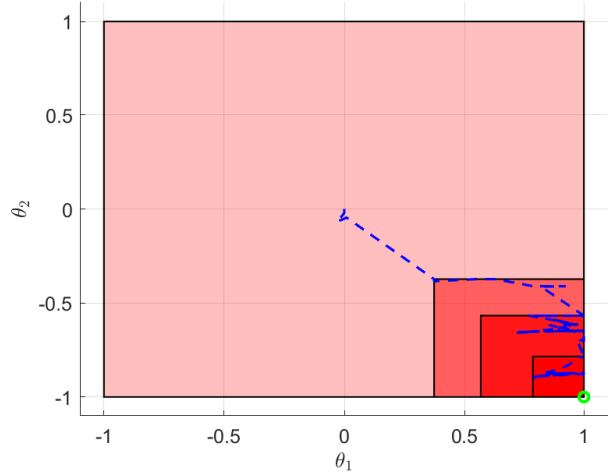}
      \caption{Shrinking parameter set $\Theta^{HC}_t$ (red) at time steps $t=\{0,10,26,90\}$, evolution of the LMS estimate $\hat{\theta}_t$ (blue dashed) and true parameters $\theta^*=(1,-1)$ (green dot).}
      \label{fig:param}
\end{figure}
 
\subsubsection*{Complexity robust MPC}
In the following we compare the conservatism and computational complexity of the different robust MPC formulations discussed in Remark~\ref{rk:robust_methods}. 
To this end, we consider the predicted trajectory $x^*_{\cdot|t}$ of the proposed approach at $t=0$ and compute\footnote{%
To allow for an intuitive comparison, we centered the homothetic and flexible tube (\cite{lorenzen2019robust,Lu2019RAMPC}) around the trajectory $\overline{x}^*_{\cdot|t}$, even though this may introduce additional conservatism. 
} the tube size $s_{\cdot|t}$ for the different formulations.
The corresponding result can be seen in Figure~\ref{fig:tube_size} and Table~\ref{tab:compare}. 
The number of optimization variables are displayed for a condensed formulation with the dynamic equality constraints eliminated.

The tube size $s$ of the considered formulation is approximately  $5\%$ larger than the flexible tube~\cite{Lu2019RAMPC} approach, while the proposed approach requires only $11\%$ of the number of optimization variables. 
This effect is amplified in comparison to the  homothetic tube formulation~\cite{lorenzen2019robust}, where the consider formulation results in an approximately $16\%$ larger tube size $s$, while the number of optimization variables and constraints are drastically reduced.  
In the considered scenario, the tube size computed using the formula~\eqref{eq:w_1} is equivalent to the proposed approach~\eqref{eq:w_2}, which implies that the conservatism of using $\overline{d}\geq \overline{d}_i$ is negligible in this example.  
The simplified formula~\eqref{eq:w_3} is significantly more conservative, resulting in approximately $3$-times the tube size $s$.  

In general, there exists a degree of freedom in the choice of the robust MPC formulation, that allows a user to trade conservatism vs. computational complexity.
These trade-offs between complexity and conservatism are, however, very problem specific. 
In this example the uncertainty is mainly due to the parametric uncertainty, which is why \eqref{eq:w_2} and \eqref{eq:w_1} are approximately equivalent and \eqref{eq:w_3}, which overapproximates the effect of the parametric uncertainty, is so conservative. 

\begin{table}[H]
\centering
\begin{tabular}{|c|r|r|c|}
\hline
Approach&$\#$ opt. var.&$\#$ ineq. con.&$s_{N|t}$ 		\\ \hline
Homothetic tube~\eqref{eq:homothetic_tube}~\cite{lorenzen2019robust}& $18173$ &$18228$&0.75 \\ \hline
Flexible tube~\eqref{eq:Lu_19}~\cite{Lu2019RAMPC}&$266$&1092&0.83\\\hline 
Proposed $w_1$~\eqref{eq:w_1}~\cite{Robust_TAC_19}&$30$&$1092$&$0.87$\\ \hline
Proposed $w_2$~\eqref{eq:w_2}&$30$&$1092$ &$0.87$\\ \hline
Simplified $w_3$~\eqref{eq:w_3}&$30$&$336$&$2.48$	\\ \hline
Nominal MPC&$14$&$84$&-\\\hline
\end{tabular}
\caption{Uncertainty propagation using robust tube approaches - Computational complexity and conservatism}
\label{tab:compare}
\end{table}
\begin{figure}[!htbp]
      \centering
      \includegraphics[width=0.5\textwidth]{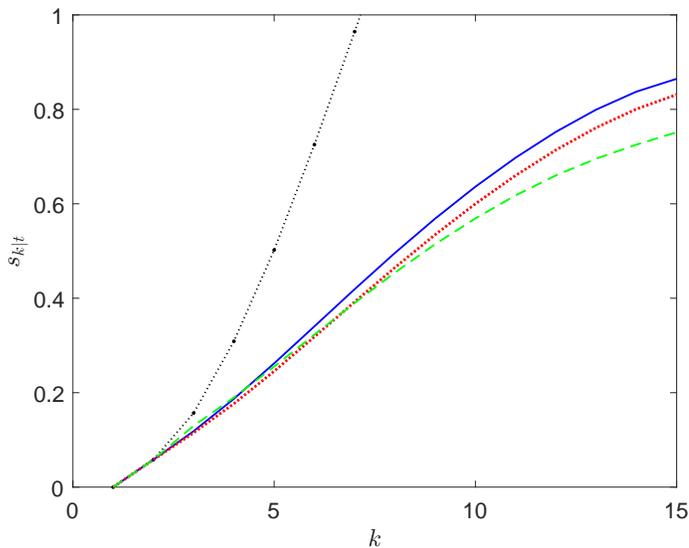}
      \caption{Tube size: Proposed robust formulation~\eqref{eq:w_2} (black, dash-dott), simplified formula~\eqref{eq:w_3} (red, dotted), flexible tube~\eqref{eq:Lu_19} (red, dotted), homothetic tube~\eqref{eq:homothetic_tube} (green, dashed).}
      \label{fig:tube_size}
\end{figure}

\section{Conclusion}
\label{sec:sum}
We have presented a robust adaptive MPC scheme for linear uncertain systems that ensures robust
constraint satisfaction, recursive feasibility and $\mathcal{L}_2$ stability. 
The proposed scheme improves the performance and reduces the conservatism online using parameter adaptation and set membership estimation. 
In addition, the formulations are such that the computational complexity is constant during runtime and only moderately increased compared to a nominal MPC scheme.
The trade-off between computational complexity and conservatism regarding different MPC formulations has been investigated with a numerical example. 
Current research is focused on reducing the conservatism and extending the framework to nonlinear uncertain systems~\cite{kohler2019robust} as a competing approach to \cite{adetola2011robust}. 
\bibliographystyle{IEEEtran}  
\bibliography{Literature_short}  
\cleardoublepage
\appendix
\label{sec:app}
Appendix~\ref{App:tube} discusses the offline computation of $P,K,\mathcal{P}$ and proves some auxiliary results. 
Appendix~\ref{App:terminal} extends the result in Proposition~\ref{prop:term} for the terminal ingredients to steady states $x_s\neq 0$. 
\subsection{Polytopic tube}
\label{App:tube}
In the following, we discuss the design of the polytopic tube in more details and prove some auxiliary results.  
\subsubsection*{LMIs}
Assumption~\ref{ass:stable} requires matrices $P,K$ that satisfy the Lyapunov inequality~\eqref{eq:P} for any parameters $\theta\in\Theta_0^{HC}$. 
Additionally, it may be advantageous to ensure that the sublevel set $\|x\|_P^2\leq 1$ is contractive, satisfies the constraints and is RPI, compare also the LMIs in the numerical example in~\cite{Robust_TAC_19}.      
The following semidefinite programming (SDP) can be used to enforce these conditions with $P=X^{-1}$, $K=YP$, the vertices $\theta^j=\overline{\theta}_0+\eta_0\tilde{e}_l$, $d^k$, and a scalar $\lambda\geq 0 $ (from the S procedure)
\begin{subequations}
\label{eq:LMIs}
\begin{align}
&\min_{X,Y}-\log\det(X),\\
\label{eq:LMI_term}
&\begin{pmatrix}
X &(A_{\theta^j}X+B_{\theta^j}Y)^\top&Q^{1/2}X&R^{1/2}Y^\top\\
*&X&0&0\\
*&*&I&0\\
*&*&*&I
\end{pmatrix}\geq 0.\\
\label{eq:contract}
&\begin{pmatrix}
\rho X &(A_{\theta^j}X+B_{\theta^j}Y)^\top\\
*&\rho X
\end{pmatrix}\geq 0.\\
\label{eq:constraints}
&\begin{pmatrix}
1&F_jX+G_jY\\
*&X
\end{pmatrix}\geq 0,\\
\label{eq:RPI}
&\begin{pmatrix}
\lambda X&0&(A_{\theta^j}X+B_{\theta^j}Y)^\top\\
0&1-\lambda&(d^k)^\top\\
*&*&X
\end{pmatrix}\geq 0.
\end{align}
\end{subequations}
Note that the constraints in~\eqref{eq:LMIs} are only LMIs for a fixed given $\rho,\lambda$, which can be adjusted in an outer loop (similar to bisection).
Given $P,K$, standard methods can be used to compute the polytope $\mathcal{P}$, such as computing the minimal RPI set~\cite{kouramas2005minimal,rakovic2005minimal}, the maximal invariant/contractive set~\cite{pluymers2005efficient,blanchini2008set} or maximal RPI set~\cite{kerrigan2001robust},\cite[Alg.~4.3]{fleming2016robust}. 
To the best of the authors knowledge, there exists no algorithm to compute a polytope $\mathcal{P}$, that is explicitly tailored to satisfying inequality~\eqref{eq:term_cond}.

\subsubsection*{Alternative tube propagation}
In Remark~\ref{rk:robust_methods} various alternative robust tube methods are discussed. 
In the following, we briefly derive the formula~\eqref{eq:w_3} and show that it can equally be used in the proposed scheme. 
Suppose the matrix $B_{\theta}$ is given by $B_{\theta}=B_0+\sum_{i=1}^{p_B} [\theta]_i B_i$,  where $p_B<p$ denotes the number of parameters with $B_i\neq 0$. 
Proposition~\ref{prop:cont_w} ensures that the following inequality holds
\begin{align*}
w_{\eta}(x,u)\leq \tilde{w}_{\eta}(x,u):=w_{\eta}(0,u-Kx)+\eta L_{\mathbb{B}}\max_i H_ix.
\end{align*}
Thus, we can use $\tilde{w}_{\eta}$ instead of $w_{\eta}$ to upper bound the uncertainty. 
Note that this function also satisfies inequality~\eqref{eq:Lipschitz}: 
\begin{align*}
&\tilde{w}_{\eta}(x,v+K(x-z))\\
=&w_{\eta}(0,v+K(x-z)-Kx)+\eta L_{\mathbb{B}}\max_i H_i x \\
\leq&w_{\eta}(0,v-Kz)+\eta L_{\mathbb{B}}(\max_i H_i z+\max_i H_i(x-z))\\
=&\tilde{w}_{\eta}(z,v)+\eta L_{\mathbb{B}}\max_i H_i(x-z).
\end{align*}
In Theorem~\ref{thm:main} we use the fact that the function $w_{\eta}$ is an upper bound on the uncertainty, satisfies inequality~\eqref{eq:Lipschitz} and is linear in $\eta$.  
Since the function $\tilde{w}_{\eta}$ satisfies the same properties, we can also use this function to reduce the computational demand.
The tube propagation~\eqref{eq:w_3} can be implemented using
\begin{align*}
s^+\geq&(\rho_{\theta}+\eta L_{\mathbb{B}})s+\eta L_{\mathbb{B}}g+\eta H_i(D(0,u-Kx))\tilde{e}_l+\overline{d},\\
g\geq& H_kx,~k=1,\dots,r,~l=1,\dots,2^{p_B},~i=1,\dots,r,
\end{align*}
with the $2^{p_B}$ vertices $\tilde{e}_l$ of the reduced hypercube in $\mathbb{R}^{p_B}$. 
This implementation uses $r(1+2^{p_B})$ linear inequality constraints and an additional auxiliary variable $g$. 
Thus, in case $p_B<<p$, this can lead to a significant reduction in the computational complexity. 
Furthermore, for $p_B=0$, the auxiliary variable $g$ is not necessary and the constraint can be directly implemented using~\eqref{eq:w_3} with $r$ linear inequality constraints. 
Compared to~\eqref{eq:w_2} this formulation can be significantly more conservative, compare numerical example.

\subsection{Terminal ingredients - setpoint tracking}
\label{App:terminal}
In the following, we consider more general terminal ingredients for steady states other than the origin. 
The additional difficulties are as follows: a) the uncertainty at the steady state $(x_s,u_s)$ is greater than the uncertainty at the origin; b) not every steady state remains a steady state if the parameters $\theta$ change and c) the corresponding steady state input $u_s$ may change if $\theta$ changes. 
To this end, the following assumption generalizes the conditions in Assumption~\ref{ass:term}. 
\begin{assumption}\label{ass:term_2}
Consider the set $\Theta_0$ from Assumption~\ref{ass:model}.
There exists a terminal region $\mathcal{X}_{f,\Theta^{HC}}\subset\mathbb{R}^{n+1}$ and a terminal controller $k_{f,\Theta^{HC}}:\mathbb{R}^n\rightarrow\mathbb{R}^m$, such that the following properties hold 
\begin{description}
\item[] for all $\Theta^{HC}=\bar{\theta}\oplus\eta\mathbb{B}_p\subseteq\Theta_0^{HC}$
\item[] for all $\tilde{\Theta}^{HC}=\tilde{\bar{\theta}}\oplus\tilde{\eta}\mathbb{B}_p\subseteq\Theta^{HC}$
\item[] for all $(x,s)\in\mathcal{X}_{f,\Theta^{HC}}$,~$u=k_{f,\Theta^{HC}}(x)$
\item[] for all $\tilde{s}\in\mathbb{R}_{\geq0}$,
\item[] for all $s^+\in [0,(\rho_{\overline{\theta}}+\eta L_{\mathbb{B}})s+w_{\eta}(x,u)+\overline{d}-\tilde{s}]$
\item[] for all $x^+\;\textrm{s.t.}\;\max_{i}H_i(x^+-A_{\overline{\theta}}x-B_{\overline{\theta}}u)\leq \tilde{s}$:
\end{description}
\begin{subequations}
\begin{align}
\label{eq:ass_term_2_2}
	&(x^+,s^+)\in\mathcal{X}_{f,\tilde{\Theta}^{HC}},\\
\label{eq:ass_term_3_2}
	&F_jx+G_ju+c_js\leq 1, \quad j=1,...,q.
\end{align}
\end{subequations}
\end{assumption}

The following proposition provides a simple terminal set for this case, under additional conditions on the steady state and possibly conservative bounds.
\begin{proposition}
\label{prop:term_app_2}
Let Assumption~\ref{ass:model} and \ref{ass:stable} hold.
Suppose there exists a state $x_s$, such that for any ${\theta}\in\Theta_0^{HC}$, there exists an input $u_{s,{\theta}}$ that satisfies $x_s=A_{{\theta}}x_s+B_{{\theta}}u_{s,{\theta}}$. 
Given parameters $\theta\in\mathbb{R}^p$ and a parameter set $\Theta^{HC}=\overline{\theta}\oplus\eta\mathbb{B}_p$, we define
\begin{subequations}
\begin{align}
\label{eq:f_theta}
f_{\theta}:=&\min_j\dfrac{1-F_j x_s-G_j u_{s,\theta}}{c_j},\\
\label{eq:w_theta}
{w}_{\theta}:=&\max_{i,j} H_i D(x_s,u_{s,{\theta}})\tilde{e}_j,\\
\label{eq:f_bar}
\underline{f}_{\Theta^{HC}}:=&\min_{\theta\in\Theta^{HC}}f_{\theta},\\
\label{eq:w_bar}
\overline{w}_{\Theta^{HC}}:=&\max_{\theta\in\Theta^{HC}}w_{\theta}.
\end{align}
\end{subequations}
Suppose that the following condition holds
\begin{align}
\label{eq:term_cond_app_2}
\eta_0 \overline{w}_{\Theta^{HC}_0}+\overline{d}{\leq} \underline{f}_{\Theta_0^{HC}}(1-\rho_{\overline{\theta}_0}-\eta_0 L_{\mathbb{B}}).
\end{align}
Then the terminal set 
\begin{align}
\label{eq:term_set_app_2}
\mathcal{X}_{f,\Theta^{HC}}:=&\{(x,s)\in\mathbb{R}^{n+1}|s+H_i (x-x_s)\leq \underline{f}_{\Theta^{HC}}\},
\end{align}
and the terminal controller $k_{f,\Theta^{HC}}(x):=u_{s,\overline{\theta}}+K(x-x_s)$
satisfy Assumption~\ref{ass:term_2}. 
\end{proposition}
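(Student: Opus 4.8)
The plan is to mirror the structure of the proof of Proposition~\ref{prop:term}, but with two modifications that account for the nonzero steady state: the terminal set and controller are centered at $(x_s,u_{s,\overline{\theta}})$ rather than the origin, and the steady-state input $u_{s,\theta}$ is allowed to depend on $\theta$. I would split the argument into three parts: (i) a monotonicity step showing that condition~\eqref{eq:term_cond_app_2} posed for $\Theta_0^{HC}$ propagates to every $\Theta^{HC}\subseteq\Theta_0^{HC}$; (ii) verification of the tightened constraint~\eqref{eq:ass_term_3_2}; and (iii) verification of the robust positive invariance~\eqref{eq:ass_term_2_2}. For part (i), I would use that $\Theta^{HC}\subseteq\Theta_0^{HC}$ forces $\eta\le\eta_0$, $\underline{f}_{\Theta^{HC}}\ge\underline{f}_{\Theta_0^{HC}}$ (a minimum over a smaller set), $\overline{w}_{\Theta^{HC}}\le\overline{w}_{\Theta_0^{HC}}$ (a maximum over a smaller set), and $\rho_{\overline{\theta}}+\eta L_{\mathbb{B}}\le\rho_{\overline{\theta}_0}+\eta_0 L_{\mathbb{B}}$ via Proposition~\ref{prop:cont_rho} exactly as in the proof of Proposition~\ref{prop:term}. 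Since~\eqref{eq:term_cond_app_2} forces $1-\rho_{\overline{\theta}_0}-\eta_0 L_{\mathbb{B}}>0$ and $\underline{f}_{\Theta_0^{HC}}>0$, the left-hand side can only decrease and the right-hand side can only increase under set shrinkage, so~\eqref{eq:term_cond_app_2} holds with $\eta_0,\overline{\theta}_0$ replaced by $\eta,\overline{\theta}$.

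The central algebraic step, and what makes the nonzero-setpoint case work, is to exploit the steady-state relation $x_s=A_{\overline{\theta}}x_s+B_{\overline{\theta}}u_{s,\overline{\theta}}$ together with $u=k_{f,\Theta^{HC}}(x)=u_{s,\overline{\theta}}+K(x-x_s)$ to obtain the clean identity $A_{\overline{\theta}}x+B_{\overline{\theta}}u-x_s=A_{cl,\overline{\theta}}(x-x_s)$. This reduces the predicted one-step state increment to the same contractive form as in the origin case, but measured from $x_s$. I would then bound the parametric uncertainty at $u$ by invoking Proposition~\ref{prop:cont_w} with $(z,v)=(x_s,u_{s,\overline{\theta}})$, yielding $w_{\eta}(x,u)\le w_{\eta}(x_s,u_{s,\overline{\theta}})+\eta L_{\mathbb{B}}\max_iH_i(x-x_s)$, and identifying $w_{\eta}(x_s,u_{s,\overline{\theta}})=\eta w_{\overline{\theta}}\le\eta\overline{w}_{\Theta^{HC}}$ from definitions~\eqref{eq:w_theta},\eqref{eq:w_bar}.

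For part (iii), I would combine these pieces: using $\max_iH_i(x^+-x_s)\le\tilde{s}+\rho_{\overline{\theta}}\max_iH_i(x-x_s)$ (from the identity above and~\eqref{eq:rho}) together with the bound on $s^+$, the $\tilde{s}$ terms cancel and the $\max_iH_i(x-x_s)$ terms collect into $(\rho_{\overline{\theta}}+\eta L_{\mathbb{B}})(s+\max_iH_i(x-x_s))$. Bounding $s+\max_iH_i(x-x_s)\le\underline{f}_{\Theta^{HC}}$ by membership in the terminal set~\eqref{eq:term_set_app_2} reduces the invariance inequality to exactly the propagated condition~\eqref{eq:term_cond_app_2} for $\Theta^{HC}$, giving $s^++H_i(x^+-x_s)\le\underline{f}_{\Theta^{HC}}\le\underline{f}_{\tilde{\Theta}^{HC}}$, i.e.\ $(x^+,s^+)\in\mathcal{X}_{f,\tilde{\Theta}^{HC}}$. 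Part (ii) is then the easier computation: expanding $F_jx+G_ju$ about $(x_s,u_{s,\overline{\theta}})$, bounding $[F+GK]_j(x-x_s)\le c_j\max_iH_i(x-x_s)$ via~\eqref{eq:c_j}, and using the terminal-set bound together with the definition of $f_{\overline{\theta}}$ in~\eqref{eq:f_theta} and $\underline{f}_{\Theta^{HC}}\le f_{\overline{\theta}}$.

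The main obstacle I anticipate is bookkeeping the two monotonicity directions consistently: the \emph{worst-case} steady-state uncertainty $\overline{w}$ and the \emph{tightest} constraint margin $\underline{f}$ are defined by optimizing over $\theta$ in the whole set, yet the invariance computation is carried out at the specific center $\overline{\theta}$ and must land in the smaller set $\mathcal{X}_{f,\tilde{\Theta}^{HC}}$. The realization that dissolves this is that proving invariance into $\mathcal{X}_{f,\Theta^{HC}}$ already suffices, since $\underline{f}_{\tilde{\Theta}^{HC}}\ge\underline{f}_{\Theta^{HC}}$ gives $\mathcal{X}_{f,\Theta^{HC}}\subseteq\mathcal{X}_{f,\tilde{\Theta}^{HC}}$; verifying that all worst-case quantities entering the estimate are those associated with $\Theta^{HC}$ rather than $\tilde{\Theta}^{HC}$ is where care is needed.
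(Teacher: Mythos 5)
Your proposal is correct and follows essentially the same route as the paper's proof: the same monotonicity propagation of condition~\eqref{eq:term_cond_app_2} using Proposition~\ref{prop:cont_rho} and the definitions~\eqref{eq:f_bar},~\eqref{eq:w_bar}, the same constraint verification via~\eqref{eq:c_j} and $\underline{f}_{\Theta^{HC}}\leq f_{\overline{\theta}}$, and the same RPI argument built on the steady-state identity $A_{\overline{\theta}}x+B_{\overline{\theta}}u-x_s=A_{cl,\overline{\theta}}(x-x_s)$ and Proposition~\ref{prop:cont_w} applied at $(z,v)=(x_s,u_{s,\overline{\theta}})$. Your closing observation that invariance into $\mathcal{X}_{f,\Theta^{HC}}$ suffices because $\underline{f}_{\tilde{\Theta}^{HC}}\geq\underline{f}_{\Theta^{HC}}$ implies $\mathcal{X}_{f,\Theta^{HC}}\subseteq\mathcal{X}_{f,\tilde{\Theta}^{HC}}$ merely makes explicit a step the paper leaves implicit.
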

\begin{proof}
As in Proposition~\ref{prop:term},  $\Theta^{HC}\subseteq\Theta^{HC}_0$, and Proposition~\ref{prop:cont_rho} imply
	$\rho_{\overline{\theta}}+\eta L_{\mathbb{B}}\leq\rho_{\overline{\theta}_{0}}+\eta_{0}L_{\mathbb{B}}$.
Furthermore, $\Theta^{HC}\subseteq\Theta_0^{HC}$ and the definitions in~\eqref{eq:f_bar},~\eqref{eq:w_bar} 
ensure that $\overline{w}_{\Theta^{HC}}\leq \overline{w}_{\Theta_0^{HC}}$ and $\underline{f}_{\Theta^{HC}}\geq \underline{f}_{\Theta_0^{HC}}$.  
Thus, satisfaction of~\eqref{eq:term_cond_app_2}, ensures satisfaction of~\eqref{eq:term_cond_app_2} with $\eta_0,\overline{\theta}_0,\Theta_0^{HC}$ replaced by $\eta,\overline{\theta},\Theta^{HC}$. 
Satisfaction of the tightened constraints~\eqref{eq:ass_term_3_2} follows from 
\begin{align*}
	&F_jx+G_ju+c_js\\
=&[F+GK]_j(x-x_s)+F_jx_s+G_ju_{s,\overline{\theta}}+c_j s\\
\stackrel{\eqref{eq:c_j}}{\leq} &c_j \max_i H_i(x-x_s)+c_j s+F_j x_s+G_j u_{s,\overline{\theta}}\\
\stackrel{\eqref{eq:f_theta},\eqref{eq:term_set_app_2}}{\leq} &c_j\underline{f}_{\Theta^{HC}}+1-c_j{f}_{\overline{\theta}}\stackrel{\eqref{eq:f_bar}}{\leq} 1.
\end{align*}
We use the following bound based on Proposition~\ref{prop:cont_w}:
\begin{align}
\label{eq_bound_w_tilde_terminal_2}
&	\max_{(x,s)\in\mathcal{X}_{f,\Theta^{HC}}}{w}_{\eta}(x,k_{f,\Theta^{HC}}(x))+\eta L_{\mathbb{B}}s\nonumber\\
\stackrel{\eqref{eq:Lipschitz}}{\leq}&{w}_{\eta}(x_s,u_{s,\overline{\theta}})+\eta L_{\mathbb{B}}\max_{(x,s)\in\mathcal{X}_f}(\max_i H_i (x-x_s)+s)\nonumber\\
\stackrel{\eqref{eq:term_set_app_2}}{\leq} &{w}_{\eta}(x_s,u_{s,\overline{\theta}})+\eta L_{\mathbb{B}}\underline{f}_{\Theta^{HC}}\nonumber\\\stackrel{\eqref{eq:w_theta}\eqref{eq:w_bar}}{\leq} &\eta \overline{w}_{\Theta^{HC}}+\eta L_{\mathbb{B}}\underline{f}_{\Theta^{HC}}.
\end{align}
The state $x^+$ can be bounded as follows
\begin{align}
\label{eq:bound_terminal_x_plus_2}
	\max_{i}H_i(x^+-x_s)\leq&\max_{i}H_iA_{cl,\overline{\theta}}(x-x_s)+\tilde{s}\nonumber\\
\stackrel{\eqref{eq:rho}}{\leq}&\rho_{\bar{\theta}}\max_{i}H_i(x-x_s)+\tilde{s}.
\end{align}
The RPI condition~\eqref{eq:ass_term_2_2} follows with
\begin{align*}
&s^+ + H_i (x^+-x_s)\\
\leq&\rho_{\overline{\theta}}s+\eta  L_{\mathbb{B}}s+w_{\eta}(x,k_{f,\Theta^{HC}}(x))+\overline{d}-\tilde{s}\\
&+ H_i A_{cl,\overline{\theta}}(x-x_s)+\tilde{s}\\
\stackrel{\eqref{eq_bound_w_tilde_terminal_2}\eqref{eq:bound_terminal_x_plus_2}}{\leq}& \rho_{\overline{\theta}}(s+\max_i H_i (x-x_s))+\eta \overline{w}_{\Theta^{HC}}+\eta L_{\mathbb{B}}\underline{f}_{\Theta^{HC}}+\overline{d}\\
\stackrel{\eqref{eq:term_set_app_2}}{\leq}& \rho_{\overline{\theta}}\underline{f}_{\Theta^{HC}}+\eta L_{\mathbb{B}}\underline{f}_{\Theta^{HC}}+\eta\overline{w}_{\Theta^{HC}}+\overline{d}\stackrel{\eqref{eq:term_cond_app_2}}{\leq} \underline{f}_{\Theta^{HC}}. 
\end{align*}
\end{proof}
Since this terminal set satisfies all the properties of the terminal set used in the proof of Theorem~\ref{thm:main}, we can use this design to ensure robust recursive feasibility, even if the terminal set is not centered around the origin. 
\begin{remark}
In principle, the terminal constraint could be replaced by the parameter dependent, less conservative constraint:
\begin{align}
\mathcal{X}_{f,\overline{\theta}}:=&\{(x,s)\in\mathbb{R}^{n+1}|s+H_i (x-x_s)\leq  f_{\overline{\theta}}\}.
\end{align}
However, this complicates the analysis of recursive feasibility under changing parameters and is thus not pursued here.

For many systems $u_{s,\theta}$ is affine in $\theta$ and thus $\underline{f}_{\Theta^{HC}}$ can be efficiently computed using linear programs. 
By noting that $1/\underline{f}_{\Theta^{HC}}$ plays the same role as $c_{\max}$ in~\eqref{eq:term_cond}, we can see that the main difference to the condition in Prop.~\ref{prop:term} is the additional disturbance term $\eta\overline{w}_{\Theta^{HC}}$ due to the effect of the parametric uncertainty at the steady state.  
In case $B_{\theta}$ is independent of $\theta$, $w_{\theta}$ is independent of $\theta$ and $\overline{w}_{\Theta^{HC}}$ can be computed using a linear program.

Furthermore, in case condition~\eqref{eq:term_cond_app_2} is not satisfied since the desired setpoint is too close to the constraints or the uncertainty at the setpoint $\eta\overline{w}_{\Theta^{HC}}$ is too large, a natural solution is to consider an artificial steady-state as in~\cite{limon2008mpc}. 
In this case the scheme will initially stabilize a setpoint that has a larger distance to the constraints. 
By reducing the uncertainty ($\eta_t$) in closed-loop operation, this setpoint can then potentially move closer to the constraint set.
In general, combing online adaptation with artificial setpoints seems like a promising approach to address practical challenges, compare e.g. also~\cite{RS_CDC_Dual_19}. 
Deriving a corresponding more general formulation is an open issue. 
\end{remark}

\end{document}